\newtheorem{cor}[theorem]{Corollary}
\newtheorem{definition}[theorem]{Definition}
\newtheorem{exa}{Example}[section]
\newcommand{\R}{\ensuremath{\mathbb{R}}}
\def\argmin{\mathop{\rm argmin}}
\def\SMA{\mathop{\rm SMA}}
\def\slfs{\mathop{\rm slfs}}
\def\MA{\mathop{\rm MA}}
\def\lfs{\mathop{\rm lfs}}
\def\argmax{\mathop{\rm argmax}}
\def\reg{\mathop{\rm reg}}
\def\sign{\mathop{\rm sign}}
\def\diam{\mathop{\rm diam}}
\def\Gsim{\mathop{\Gamma}}
\newcommand{\eps}{\varepsilon}
\newcommand{\dQ}{\ensuremath{\mathtt{d}_{Q}}}
\newcommand{\dF}{\ensuremath{\mathtt{d}_{F}}}
\newcommand{\dH}{\ensuremath{\mathtt{d}_{H}}}
\newcommand{\dQmD}{\ensuremath{\mathtt{d}^{\textsf{mD}}_{Q}}}
\newcommand{\Frechet}{Fr\'echet\xspace}
\begin{document}

\title{Orientation-Preserving Vectorized Distance Between Curves}

\author{Jeff M. Phillips and Hasan Pourmahmood-Aghababa \\ University of Utah \\ \texttt{jeffp@cs.utah.edu} and \texttt{h.pourmahmoodaghababa@utah.edu}}

\maketitle

\begin{abstract}
We introduce an orientation-preserving landmark-based distance for continuous curves, which can be viewed as an alternative to the \Frechet or Dynamic Time Warping distances.  This measure retains many of the properties of those measures, and we prove some relations, but can be interpreted as a Euclidean distance in a particular vector space.   Hence it is significantly easier to use, faster for general nearest neighbor queries, and allows easier access to classification results than those measures.  It is based on the \emph{signed} distance function to the curves or other objects from a fixed set of landmark points.  
We also prove new stability properties with respect to the choice of landmark points, and along the way introduce a concept called signed local feature size (slfs) which parameterizes these notions.  Slfs explains the complexity of shapes such as non-closed curves where the notion of local orientation is in dispute -- but is more general than the well-known concept of (unsigned) local feature size, and is for instance infinite for closed simple curves.  
Altogether, this work provides a novel, simple, and powerful method for oriented shape similarity and analysis.   
\end{abstract}

\keywordlist{Computational geometry, learning on structured data, feature mapping, medial axis, feature size.} \\
\textbf{Venue:} To be appeared in Mathematical and Scientific Machine Learning (MSML), August 2021. 

\let\thefootnote\relax\footnotetext{Jeff Phillips thanks his support from NSF CCF-1350888, CNS- 1514520, CNS-1564287, and IIS-1816149. The authors thank Micka\"el Buchet for helping uncover a technical issue in a previous version.}

\newpage
\setcounter{page}{1}

\section{Introduction}

The \Frechet distance~\cite{AG95} is a very popular distance between curves; it has spurred significantly practical work improving its empirical computational time~\cite{bringmann2019walking} (including a recent GIS Cup challenge~\cite{werner2018acm,baldus2017fast,buchin2017efficient,dutsch2017filter} and inclusion in sklearn) and has been the subject of much algorithmic studies on its computational complexity~\cite{Bri14,ABKS14,buchin2017four}.  While in some practical settings it can be computed in near-linear time~\cite{driemel2012approximating}, there exists settings where it may require near-quadratic time -- essentially reverting to dynamic programming~\cite{Bri14}.  

The interest in studying the \Frechet distance (and similar distances like the discrete \Frechet distance~\cite{TEHM1994}, Dynamic Time Warping~\cite{lemire2009faster}, edit distance with real penalties~\cite{edr}) has grown recently due to the very real desire to apply them to data analysis.  Large corpuses of trajectories have arisen through collection of GPS traces of people~\cite{geolife-gps-trajectory-dataset-user-guide}, vehicles~\cite{GTDS2016}, or animals~\cite{buchin2019klcluster}, as well as other shapes such as letters~\cite{williams2007primitive}, time series~\cite{DKS16}, and more general shapes~\cite{AKW04}.  What is common about these measures, and what separates them from alternatives such as the Hausdorff distance is that they capture the direction or orientation of the object.  However, this enforcing of an ordering seems to be directly tied to the near-quadratic hardness results~\cite{Bri14}, deeply linked with other tasks like edit distance~\cite{BK15,backurs2015edit}.

Moreover, for data analysis on large data sets, not only is fast computation needed, but so are other operations like fast nearest-neighbor search or inner products.  While a lot of progress has been made in the case of \Frechet distance and the like~\cite{GMMC19, SLB2018, XLP2017,BCG11, DS17, DPS19, FFK20}, these operations are still comparatively slow and limited.  For instance, some of the best fast nearest neighbor search for LSH for discrete \Frechet distance on $n$ curves with $m$ waypoints can answer a query within $1+\eps$ distance using $O(m)$ time, but requiring $n \cdot O((1/\eps)^m)$  space~\cite{FFK20}; or if we reduce the space to something reasonable like $O(n \log n +mn)$, then a query in $O(m \log n)$ time can provide only an $O(m)$ approximation~\cite{DS17}.  

On the other hand, fast nearest neighbor search for Euclidean distance is far more mature, with better LSH bounds, but also quite practical algorithms~\cite{KGraph,FALCONN}.  Moreover, most machine learning libraries assume as input Euclidean data, or for other standard data types like images~\cite{imagenet} or text~\cite{Art3,Mik1} have sophisticated methods to map to Euclidean space.  
However, \Frechet distance is known not to be embeddable into a Euclidean vector space without quite high distortion~\cite{indyk1998approximate,DS17}.  

\paragraph{Embeddings first.} This paper on the other hand starts with the goal of embedding ordered/oriented curve (and shape) data to a Euclidean vector space, where inner products are natural, fast nearest neighbor search is easily available, and it can directly be dropped into any machine learning or other data analysis frameworks. 

This builds on recent work with a similar goals for halfspaces, curves, and other shapes~\cite{PT19a,PT20}.  But that work did not encode orientation.  
This orientation-preserving aspect of these distances is clearly important for some applications; it is needed to say distinguish someone going to work versus returning from work.  

Why might \Frechet be better for data analysis than discrete \Frechet or DTW or the many other distances?  One can potentially point to long segments and no need to discretize, or (quasi-)metric properties.  Regardless, an equalizer is in determining how well a distance models data is the prediction error for classification tasks; such tasks demonstrate how well the distances encode what truely matters on real tasks.  The previous vectorized representations matched or outperformed a dozen other measures~\cite{PT19a}.  In this paper, we show an oriented distance performs similarly on general tasks, but when orientation is essential, does significantly better than non-orientation preserving measures.  
Moreover, by extending properties from similar, but non-orientable vectorized distances~\cite{PT19a,PT20}, our proposed distance inherits metric properties, can handle long segments, and \emph{also} captures curve orientation.   

More specifically, our approach assumes all objects are in a bounded domain $\Omega$ a subset of $\mathbb{R}^d$ (typically $\mathbb{R}^2$).  This domain contains a set of landmark points $Q$, which might constitute a continuous uniform measure over $\Omega$, or a finite sample approximation of that distribution.  With respect to an object $\gamma$, each landmark $q_i \in Q$ generates a value $v_{q_i}(\gamma)$.  Each of these values $v_{q_i}(\gamma)$ can correspond with the $i$th coordinate in a vector $v_Q(\gamma)$, which is finite (with $|Q|=n$-dimensions) if $Q$ is finite.  Then the distance between two objects $\gamma$ and $\gamma'$ is the square-root of the average squared distance of these values -- or the Euclidean distance of the finite vectors
\[
\dQ(\gamma, \gamma') = \|v_Q(\gamma) - v_Q(\gamma')\|.  
\]
The innovation of this paper is in the definition of the value $v_{q_i}(\cdot)$ and the implications around that choice. In particular in previous works by \cite{PT19a,PT20}, in this framework, this had been (mostly) set as the \emph{unsigned} minDist function: $v^{\mathsf{mD}}_q(\gamma) = \min_{p \in \gamma}\|p - q\|$.  In this paper we alter this definition to not only capture the distance to the shape $\gamma$, but in allowing negative values to also capture the orientation of it.  

This new definition leads to many interesting structural properties about shapes.  These include:
\begin{itemize}
\item
We show that $\dF$ is stable up to \Frechet perturbations of the curves.  In particular, for two curves $\gamma, \gamma'$ which are closed, then 
$\dQ(\gamma,\gamma') < \dF(\gamma, \gamma')$, 
where $\dF$ is the \Frechet distance.  
Moreover, for an $l^\infty$ variant of the distance $\dQ^\infty$ (see Definition \ref{d4}), 
and the curves are closed, we obtain  $\dF(\gamma,\gamma') = \dQ^{\infty}(\gamma,\gamma')$.  
Thus $\dQ$ captures orientation.  
In contrast for a class of curves we show ${\dQmD}^{,\infty}$ can equal Hausdorff distance, so this older version explicitly does not capture orientation. 

\item 
We introduce a new stability notion called the \emph{signed local feature size} for oriented curves and other shapes $\gamma$.  While it does not rely on the new signed $v_{q_i}(\gamma) : \mathbb{R}^d \to \mathbb{R}$ distance function, it captures a scale under which $v_{q_i}(\gamma)$ is stable. 
Unlike its unsigned counterpart (\emph{local feature size}, which plays a prominent role in shape reconstruction~\cite{ACK01,cheng2012delaunay,CC12} and computational topology~\cite{CFLMRW14,ChazalCohen-SteinerMerigot2011,CL05}), the signed local feature size is infinite for closed simple curves. This in turn implies $\dQ$ is stable with respect to $Q$ for all closed simple curves, where as notions which depend on (unsigned) local feature size (e.g., medial axis) are not. 

\item 
For curves with boundary, the signed local feature size requires a more intricate treatment. 
We show that when the signed local feature size $\delta$ is positive but finite ($0 < \delta < \infty$) then we can set a scale parameter $\sigma$ in the definition of $v_q$ (denoted $v_q^\sigma$) and in $\dQ$ (denoted $\dQ^\sigma$) so when $\sigma < \delta / (4(1+\sqrt{\ln(2/\eps)})$, then the signed distance function $v_i^\sigma$ is stable up to value $\eps$.  
\end{itemize}

Altogether, these results build and analyze a new vectorized, and sketchable distance between curves (or other geometric objects) which captures orientation like \Frechet (or dynamic time warping, and other popular measures), but avoids all of the complications when actually used.  As we demonstrate, fast nearest neighbor search, machine learning, clustering, etc are all now very easy.


By a {\it curve} we mean the image of a continuous non-constant mapping $\gamma: [0,1] \to \mathbb{R}^2$; we simply use $\gamma$ to refer to these curves. Two curves are equivalent if they can be reparameterized by an increasing monotone function to be identical. Hence there are two equivalence classes corresponding to the same trace of a curve, these two classes correspond to the direction of the curve.
A curve $\gamma$ is {\it closed} if $\gamma(0)= \gamma(1)$.  
It is {\it simple} if the mapping does not cross itself, i.e. $\gamma$ is one-to-one function. 

Let $\Gamma$ be the class of all simple curves $\gamma$ in $\mathbb{R}^2$ with the property that at almost every point $p$ on $\gamma$, considering the direction of the curve, there is a unique normal vector $n_p$ at $p$, i.e. there is a tangent line almost everywhere on $\gamma$. Such points are called {\it regular points} of $\gamma$ and the set of regular points of $\gamma$ is denoted by $\reg(\gamma)$. Points of $\gamma \setminus \reg(\gamma)$ are called {\it critical points} of $\gamma$.  The terminology ``almost every point" means that the Lebesgue measure of those $t\in [0,1]$ such that $\gamma(t)$ is a critical point is zero. We also assume that at critical points, which are not endpoints of a non-closed curve, $\gamma$ is left and right differentiable but left and right derivatives are possibly not identical, that is, left and right tangent lines exist. Finally, we assume that non-closed curves in $\Gamma$ have left/right tangent line at endpoints.  These assumptions will guarantee the existence of a unique normal vector at critical points that are defined in Section \ref{S1}.

\paragraph{Baseline distances.}
Important baseline distances are the Hausdorff and \Frechet distances. 
Given two compact sets $A,B \subset \R^d$, the \emph{directed Hausdorff distance} is $\overrightarrow{\dH}(A,B) = \max_{a \in A} \min_{b \in B} \|a-b\|$.  Then the \emph{Hausdorff distance} is defined $\dH(A,B) = \max\{ \overrightarrow{\dH}(A,B), \overrightarrow{\dH}(B,A)\}$.  

The \Frechet distance is defined for curves $\gamma,\gamma'$ with images in $\R^d$.   
Let $\Pi$ be the set of all monotone reparamatrizations (a non-decreasing function $\alpha$ from $[0,1] \to [0,1]$). 
It will be essential to interpret the inverse of $\alpha$ as interpolating continuity; that is, if a value $t$ is a point of discontinuity for $\alpha$ from $a$ to $b$, then the inverse $\alpha^{-1}$ should be $\alpha^{-1}(t') = t$ for all $t' \in [a,b]$.  Together, this allows $\alpha$ (and $\alpha^{-1}$) to represent a continuous curve in $[0,1] \times [0,1]$ that starts at $(0,0)$ and ends at $(1,1)$ while never decreasing either coordinate; importantly, it can move vertically or horizontally.  
Then the \emph{\Frechet distance} is 
\[
\dF(\gamma,\gamma') = \inf_{\alpha \in \Pi} \max\{\sup_{t \in [0,1)} \|\gamma(t) - \gamma'(\alpha(t))\|, \sup_{t \in [0,1)} \|\gamma(\alpha^{-1}(t)) - \gamma'(t)\|\}.  
\]
We can similarly define the \Frechet distance for closed oriented curves (see also~\cite{AKW04,SVY14}); it is useful to interpret this parameterization of curve $\gamma$ as measuring \emph{arclength}.
Given an arbitrary point $c_0 \in \gamma$, then $\gamma(t)$ for $t \in [0,1)$ indicates the distance along the curve from $c_0$ in a specified direction, divided by the total arclength.  Let $\Pi^\circ$ denote the set of all monotone, cyclic parameterizations; now $\alpha \in \Pi^\circ$ is a function from $[0,1) \to [0,1)$ where it is non-decreasing everywhere except for exactly one value $a$ where $\alpha(a) = 0$ and $\lim_{t \nearrow a} \alpha(t) = 1$.  Again, $\alpha^{-1} \in \Pi^{\circ}$ has the same form, and interpolates the discontinuities with segments of the constant function.  Then the \Frechet distance for oriented closed curves is defined $\dF(\gamma,\gamma') = \inf_{\alpha \in \Pi^\circ} \max\{ \sup_{t \in [0,1)} \|\gamma(t) - \gamma'(\alpha(t))\|, \, \sup_{t \in [0,1)} \|\gamma(\alpha^{-1}(t)) - \gamma'(t) \|\}$.  
Oriented closed curves are important for modeling boundary of shapes and levelsets~\cite{LC87}, orientation determines inside from outside.

\subsection{New Definitions for Orientation-Preserving Distance} \label{S1}  \label{sec:def}

We introduce a feature mapping based on some landmark set $Q$ as one of the core definitions of this paper. We will employ the notation $\langle \cdot, \cdot \rangle$ for the usual inner product in a Euclidean space.

\begin{definition}[Feature Mapping] \label{d6} 
Let $\gamma \in \Gamma$, $Q$ be a finite subset of $\mathbb{R}^2$ and $\sigma >0$. For each $q \in Q$ let $p = \argmin_{p' \in \gamma} \|q-p'\|$. If $p$ is not an endpoint of $\gamma$, we define 
\[
v_q^{\sigma}(\gamma)=  \frac{1}{\sigma} \langle n_{p}(q), q-p\rangle e^{- \frac{\|q-p\|^2}{\sigma^2}}.
\]
Otherwise (for endpoints) we set  
\[
v_q^{\sigma}(\gamma) =  \frac{1}{\sigma} \langle n_p, \frac{q-p}{\|q-p\|} \rangle \|q\|_{\infty,p} \, e^{- \frac{\|q-p\|^2}{\sigma^2}},
\]
where $\|q\|_{\infty,p}$ is the $l^{\infty}$-norm of $q$ in the coordinate system with axis parallel to $n_p$ and $L$ (tangent line at $p$) and origin at $p$; see Figure \ref{end-l-infty} (Left) for an illustration. 
Figure \ref{Figslfs} shows an example of $v_q^{\sigma}$ over $\R^2$.  
Notice that $\|q\|_{2,p} = \|q-p\|$ and so $\frac{1}{\sqrt{2}} \leq \frac{\|q\|_{\infty,p}}{\|q-p\|} \leq 1$. If $Q=\{q_1, q_2, \ldots, q_n\}$, setting $v_i^{\sigma}(\gamma) = v_{q_i}^{\sigma}(\gamma)$ we obtain a {\it feature mapping} $v_Q^{\sigma}: \Gsim \to \mathbb{R}^n$ defined by $v_Q^{\sigma}(\gamma)= (v_1^{\sigma}(\gamma), \cdots, v_n^{\sigma}(\gamma))$. (We will drop the superscript $\sigma$ afterwards, unless otherwise specified.)
\end{definition} 

\begin{figure}[h] 
\includegraphics[width=1 \textwidth]{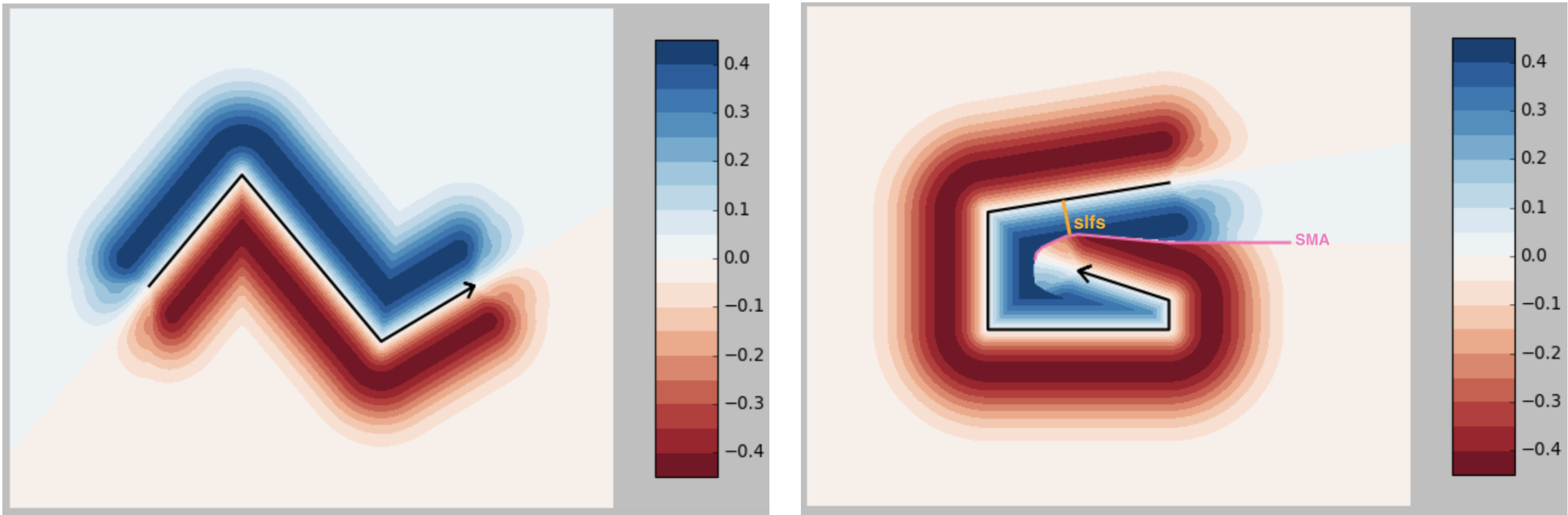}
\caption{\label{Figslfs}Left:  Example of feature mapping $v_q$ for curve with sidedness encoded by positive/negative values. Right: Signed medial axis (SMA - in pink) as the only place with large discontinuity in $v_q$ and signed local feature size (slfs - in orange).}
\end{figure}

The inner product in $v_q^\sigma(\gamma)$ captures the sidedness of $q$ with respect to its nearest point $p \in \gamma$.  The Gaussian weight dampens function value as a point $q$ becomes further from $\gamma$; if we think of it as representing the magnitude of the sidedness, then as $q$ becomes further from $\gamma$, we want to have less confidence in this value.  
In particular, as discussed below, as $q$ approaches the $\SMA$ (see Definition \ref{d5} below), the bandwidth parameter $\sigma$ can be tuned so this magnitude goes close to $0$, and the discontinuity on the $\SMA$ is bounded.  
Figure \ref{Figslfs} provides examples of feature mapping for two curves where in the second complex picture we have identified SMA. 

\begin{figure}[h] 
\includegraphics[width=0.95 \textwidth]{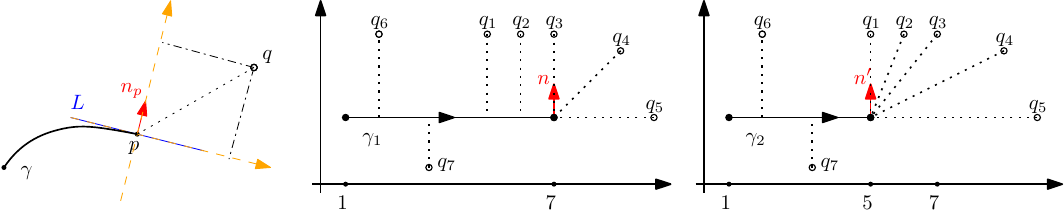}
 
\caption{\label{end-l-infty}$l^{\infty}$-norm at endpoints distinguishes these two simple curves}
\end{figure}

Regarding endpoints, we keep $n_p(q)$ fixed, independent of the choice of $q$.  This ensures that for points $q$ that would be on the ray extending from the endpoint in the same direction, that the function value is $0$, and thus $v_q^\sigma(\gamma)$ does not have a discontinuity here as the sign changes.  
We use $l^{\infty}$-norm (the $\|q\|_{\infty,p}$ term) at these endpoints to obtain the definiteness property of the distance $\dQ^{\sigma}$ in Definition \ref{d4}. Indeed, it enables us to distinguish two curves which are almost identical, such as the two line segments $\gamma_1$ and $\gamma_2$ in Figure \ref{end-l-infty}. If we had alternatively, employed an $l^2$-norm at endpoints (using $\|q\|_{2,p}$, as would be equivalent to the definition of $v_q^\sigma(\gamma)$ at non-endpoints $p$), both curves $\gamma$ and $\gamma'$ would be mapped to the same $v_Q^\sigma$ vectors, i.e. $v_Q^\sigma(\gamma) = v_Q^\sigma(\gamma')$. In contrast, the $l^{\infty}$-norm at endpoints will provide different $v_Q^\sigma$ vectors for them. We will ultimately require that $Q$ is sufficiently dense in order to have definiteness property, discussed next.

\begin{definition}[Orientation Preserving Distance] \label{d4} 
Let $\gamma_1, \gamma_2 \in \Gsim$, $Q=\{q_1, q_2, \ldots, q_n\}$ be a point set in ${\mathbb{R}^2}$, $\sigma>0$ be a positive constant and $p\in [1, \infty]$. The orientation preserving distance of $\gamma_1$ and $\gamma_2$, associated with $Q$, $\sigma$ and $p$, denoted $\dQ^{\sigma,p}(\gamma_1, \gamma_2)$, is the normalized $l^p$-Euclidean distance of two $n$-dimensional feature vectors $v_Q(\gamma_1)$ and $v_Q(\gamma_2)$ in $\mathbb{R}^n$, i.e. for $p \in [1,\infty)$, 
\[
\dQ^{\sigma,p}(\gamma_1, \gamma_2)= \frac{1}{\sqrt[p]{n}} \|v_Q(\gamma_1)- v_Q(\gamma_2)\|_p = \bigg(\frac{1}{n}\sum_{i=1}^n |v_i(\gamma_1)- v_i(\gamma_2)|^p \bigg)^{1/p},
\]
and for $p = \infty$,
\[
\dQ^{\sigma,\infty}(\gamma_1, \gamma_2)= \|v_Q(\gamma_1)- v_Q(\gamma_2)\|_{\infty}= \max_{1 \leq i \leq n} |v_i(\gamma_1)- v_i(\gamma_2)|.
\]
As default we use $\dQ^{\sigma}$ instead of $\dQ^{\sigma,2}$.  
Landmarks $Q$ can be described by a probability distribution $\mu : \R^2 \to \R$, then $v_Q$ is infinite-dimensional, and for $p \in [1, \infty)$ we can define $\dQ^{\sigma,p}(\gamma_1,\gamma_2) = (\int_{q \in \R^2} |v_q(\gamma_1)-v_q(\gamma_2)|^p \mu(q))^{1/p}$.
\end{definition}

Since curves are embedded into a Euclidean space, and the usual $l^p$-norm induces a distance between two curves, the function $\dQ^{\sigma,p}$ enjoys all properties of a metric but the definiteness property. That is, it satisfies triangle inequality, is symmetric, and $\dQ^{\sigma,p}(\gamma_1, \gamma_2)=0$ provided $\gamma_1 = \gamma_2$ (i.e. $\gamma_1$ and $\gamma_2$ have same range and direction). 
However, $\dQ^{\sigma,p}(\gamma_1, \gamma_2)=0$ does not necessarily imply $\gamma_1 = \gamma_2$: consider two curves which overlap, and all landmarks have closest points on the overlap.

To address this problem, following \cite{PT19a}, we can restrict the family of curves to be $\tau$-separated (they are piecwise-linear and critical points are a distance of at least $\tau$ to non-adjacent parts of the curve), and assume the landmark set is sufficiently dense (e.g., a grid with separation $\leq \tau/16$).  Under these conditions again $\dQ^{\sigma,p}$ is definite, and is a metric.  

In higher dimensions, our feature mapping and thus the distance $\dQ^{\sigma}$ could be extended to surfaces of codimension 1, given an appropriate generalization for dealing with surface boundary.

\paragraph{Computational complexity.}  For piecewise-linear curves $\gamma$ with $m$ segments, the feature map to $\R^n$ can be computed in $O(mn)$ time, by for each point $q_i$ taking the minimum distance among all segments.  Then distance between any pair of curves takes $O(n)$ time.  When $n$ is a constant (e.g., as $n=20$ recommended by \cite{PT19a}), then these runtimes are as fast as reading the data, and constant.  
In comparison, \Frechet distance takes time $O(m^2 \log m)$~\cite{AG95} and $\Omega(m^2 / \log m)$ assuming SETH~\cite{Bri14}.

\section{Signed Local Feature Size and Signed Medial Axis}

Given a curve $\gamma$ in $\R^2$, previous work studied ways it interacts with the ambient space.
The \emph{medial axis} $\MA(\gamma)$~\cite{lee1982medial,amenta2001power} is the set of points $q \in \R^2$ where the minimum distance $\min_{p \in \gamma} \|p-q\|$ is not realized by a unique point $p \in \gamma$.  
The \emph{local features size}~\cite{amenta1999surface} for a point $p \in \gamma$  defined by $\lfs_p(\gamma) = \inf_{r \in \MA(\gamma)} \|r-p\|$ is the minimum distance from $p$ to the medial axis of $\gamma$.  
We introduce signed versions of local feature size and medial axis which are intricately tied to the stability of $\dQ^{\sigma}$.  We use the notation ${\rm int}(\overline{pp'})$ to show the interior of a line segment $\overline{pp'}$, which is the line segment $\overline{pp'}$ without its endpoints.

\begin{figure}[h]
\includegraphics[width=1\textwidth]{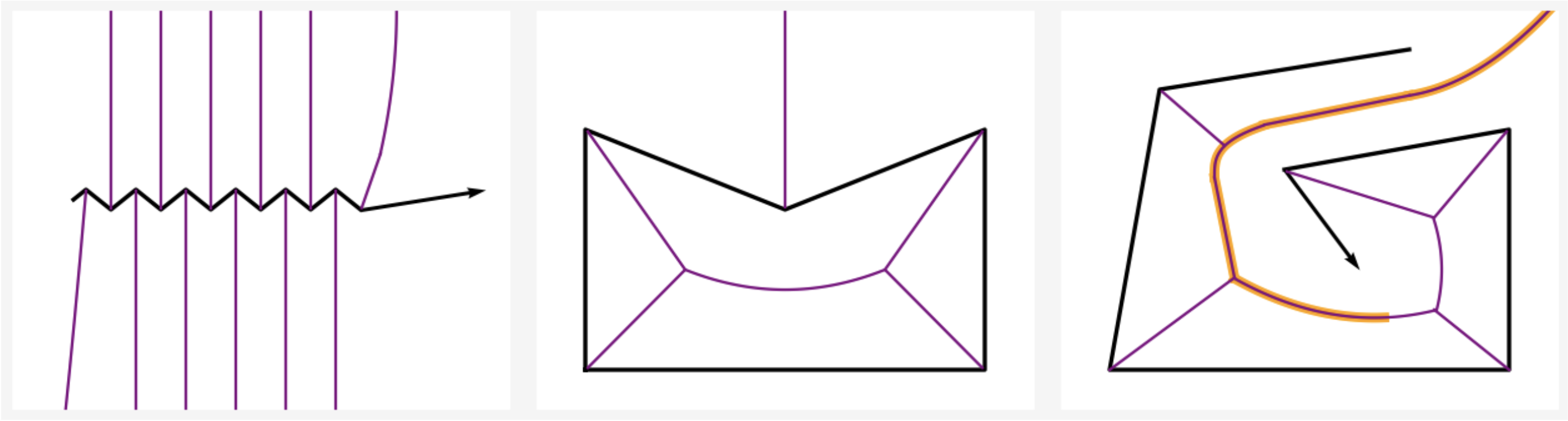}
\caption{\label{fig:MA-SMA}MA (in purple) and $\SMA$ (in orange) in example shapes (in black): noisy curve with large MA and no $\SMA$, closed curve with no $\SMA$, and spiral with $\SMA$ a subset of MA.}
\end{figure}

\begin{definition}[Signed Local Feature Size] \label{d3}
Let $\gamma \in \Gsim$ be a curve and $p$ be a point of  $\reg(\gamma)$. Define  
\[
\delta_p(\gamma) = \inf\{\|p-p'\|: \langle n_p, p-p' \rangle \langle n_{p'}, p'-p \rangle < 0, \ {\rm int}(\overline{pp'}) \cap \gamma = \emptyset, \ p' \in \reg(\gamma)\},
\]
where we assume that the infimum of the empty set is $\infty$. 
Then we introduce the {\it signed local feature size} ($\slfs$ in short) of $\gamma$ to be $\delta(\gamma)= \inf_{p\in \reg(\gamma)} \delta_p(\gamma)$. 
\end{definition}

\begin{exa}\label{e1}
Any line segment $\gamma$ has infinite $\slfs$ since ${\rm int}(\overline{pp'})$ is a subset of $\gamma$ for any $p,p' \in \gamma$.
Moreover, any simple closed curve in $\Gsim$ has infinite $\slfs$. Let $\gamma \in \Gsim$ be a closed simple curve which, w.l.o.g., oriented clockwise. The condition ${\rm int}(\overline{pp'}) \cap \gamma = \emptyset$ for any $p, p' \in \reg(\gamma)$ implies that the line segment $\overline{pp'}$ is completely inside $\gamma$ or completely outside $\gamma$. In both cases, $p-p'$ and $n_p$ (and similarly $p'-p$ and $n_{p'}$) will be on one side of the tangent line at $p$ (resp. $p'$) and so both inner products will be positive (since $n_p$ is perpendicular to the tangent line).
\end{exa}

The slfs captures twice the minimum distance between two parts of the curve, but keeping track of orientation, so the connecting segment has endpoints which emanates from the ``left'' side of one part of the curve and the ``right'' part of the other.  Thus this does not measure small wiggles in a curve, perhaps caused by noise (see Figure \ref{fig:MA-SMA}:Left), but only if a curve comes back on itself (Figure \ref{fig:MA-SMA}:Right).  
Unlike the (unsigned) local feature size, which captures how much perturbation is needed to change the homotopy, the slfs captures a safe distance a point can be from a curve so the notion of ``which side of the curve is it on" is well-defined.

With a simple closed curve, the notion of sidedness is unambiguous, but for non-closed curves this is not always easy to interpret.  Thus, next, we use this notation of signed local feature size to also adapt the related notion of signed medial axis. 
Unlike the (unsigned) medial axis, this does not measure the skeleton of a shape, but captures a transition boundary between curves, where the notion of sidedness changes.

For each $q \in \R^2$ and corresponding minDist point $p= \argmin_{p'\in \gamma} \|q-p'\|$ on $\gamma$, we need to define a normal direction $n_p(q)$ at $p$.  For regular points $p \in \reg(\gamma)$, this can be defined naturally by the right-hand rule.  For endpoints we use the normal vector of the tangent line compatible with the direction of the curve.  For non-endpoint critical point, there are technical conditions for non-simple curves (see Appendix \ref{app:alg}), but in general we use the direction $u$ which maximizes $|\langle u, q-p \rangle|$ with sign subject to the right-hard-rule.  While $p$ depends on $q$, for simplicity we avoid the notation $p(q)$. 

\begin{definition}[Signed Medial Axis] \label{d5}
Let $\gamma \in \Gsim$ and let $q$ be a point in $\R^2$. We say that $q$ belongs to the {\it signed medial axis} of $\gamma$ ($\SMA(\gamma)$ in short) if there are at least two points $p,p'$ on $\gamma$ such that $p, p'= \argmin_{p\in \gamma} \|q-p\|$ and $\langle n_p(q), p-p' \rangle \langle n_{p'}(q), p'-p \rangle < 0$. 
\end{definition}

The signed medial axis of a curve is a subset of its usual medial axis. Also, $\delta(\gamma) = \infty$ if and only if $\gamma$ has no signed medial axis, i.e. $\SMA(\gamma) = \emptyset$. Therefore, according to Example \ref{e1}, line segments and simple closed curves in $\Gamma$ have no signed medial axis. 
Figure \ref{fig:MA-SMA} also shows examples of both MA and $\SMA$.   Observe that it is possible to have $\SMA(\gamma) = \emptyset$ but $\MA(\gamma) \neq \emptyset$.
As a more intuitive example to see how to identify the $\SMA$, see the pictures in Figure \ref{Figslfs}, which provide examples of feature mapping $v_q$ for two curves where in the second complex picture we have labeled SMA and slfs.

\begin{figure}[h] 
\includegraphics[width=0.9 \textwidth]{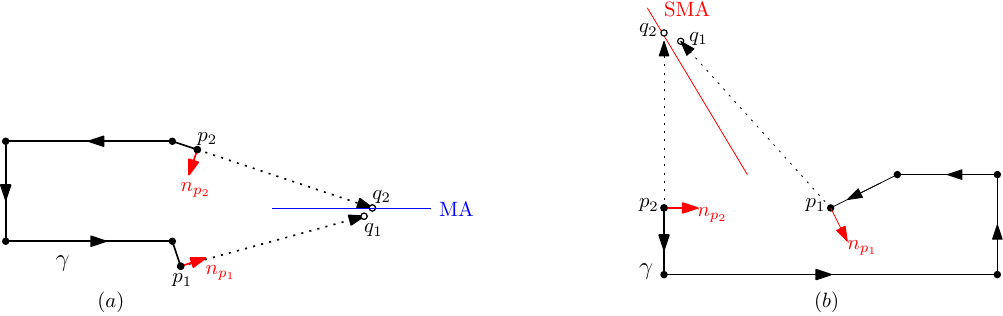}
\caption{\label{fig:q-endpoints}The cases that $q_1$ and $q_2$ choose different endpoints of $\gamma$ as $\argmin$ points}
\end{figure}

\section{Stability Properties of $\dQ^{\sigma, p}$}

In this section we proceed to the stability properties of the distance $\dQ^{\sigma, p}$.  
Our first goal is to show that $\dQ^{\sigma}$ is stable under perturbations of $Q$, which is given in two Theorems \ref{t3} and \ref{t4}. The next is to verify its stability under perturbations of curves (see Theorem \ref{t5}, its corollary and Theorem \ref{thm:eps})

\subsection{Stability of Landmarks $Q$}

As the new distance is contingent upon the choice of landmark points, it is important to examine its stability under variation of landmarks. Indeed, we would like to have some continuity-like properties in terms of perturbations of landmarks, which is  compatible with our intuition.

Before stating stability properties under perturbations of landmarks, we are going to discuss some cases that will not satisfy the desired inequality in Theorems \ref{t3} and \ref{t4}. As a result we will have to exclude these cases. The first case is when $\SMA(\gamma) = \emptyset$ and two landmarks $q_1$ and $q_2$ are on different sides of the medial axis of $\gamma$ and at least one of them chooses an endpoint as $\argmin$ point (see Figure \ref{fig:q-endpoints}(a)). The other case is when $\SMA(\gamma)$ is nonempty, $q_1$ and $q_2$ are in different sides of $\SMA(\gamma)$ and at least one of them chooses an endpoint as $\argmin$ point and is along the tangent of that endpoint (see Figure \ref{fig:q-endpoints}(b)). In both cases, $q_1$ can be arbitrarily close to $q_2$ but $|v_1(\gamma) - v_2(\gamma)|$ is likely to be roughly $1/\sqrt{e}$ since $v_2(\gamma)=0$. For instance, in Figure \ref{fig:q-endpoints}(a), $|v_1(\gamma) - v_2(\gamma)| = |v_1(\gamma)| = \frac{1}{\sigma} \|q_1-p_1\| e^{\frac{-\|q_1-p_1\|^2}{\sigma^2}}$, which can be as close as $1/\sqrt{e}$ when $\|q_1-p_1\|$ is about $\sigma/ \sqrt{2}$.

Summarizing these results:  If these end-point cases for $q_1, q_2$ do not occur (e.g., $\gamma$ is closed), then $|v_1(\gamma)  - v_2(\gamma)| \leq \frac{1}{\sigma} \|q_1- q_2\|$.  Or in the case of a $\SMA \neq \emptyset$, then $|v_1(\gamma)  - v_2(\gamma)| \leq \max\{\epsilon, \frac{1}{\sigma} \|q_1- q_2\|\}$ if $\sigma$ is set sufficiently small with respect to the slfs $\delta(\gamma)$. 

\begin{theorem}[Landmark stability I] \label{t3}
Let $\gamma \in \Gamma$ and $q_1, q_2$ be two points in $\mathbb{R}^2$. If $\delta(\gamma)= \infty$ and $q_1$ and $q_2$ do not satisfy the above first case (e.g., $\gamma$ is a closed curve), then $|v_1(\gamma)- v_2(\gamma)|\leq \frac{1}{\sigma}\|q_1-q_2\|$.
\end{theorem}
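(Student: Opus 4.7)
The plan is to show that $q\mapsto v_q(\gamma)$ is $(\sqrt{2}/\sigma)$-Lipschitz whenever $(q_1,q_2)$ avoids the excluded configuration. Since $\delta(\gamma)=\infty$ forces $\SMA(\gamma)=\emptyset$ (remark following Definition~\ref{d5}) and Example~\ref{e1} identifies the relevant curves essentially as closed simple curves or line segments, I would treat the two geometries separately.

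For a closed $\gamma$ there are no endpoints, so every $q\in\R^2$ uses the regular formula $v_q(\gamma)=g(s(q))$ with $s(q):=\langle n_{p(q)}(q),q-p(q)\rangle$ the signed distance and $g(t):=(t/\sigma)e^{-t^2/\sigma^2}$. Because $\SMA(\gamma)=\emptyset$, $s$ is globally $1$-Lipschitz on $\R^2$. A one-line calculus exercise (set $u=t^2/\sigma^2\ge 0$ and verify $|1-2u|e^{-u}\le 1$) shows $|g'|\le 1/\sigma$, so composition gives the stronger bound $|v_{q_1}-v_{q_2}|\le (1/\sigma)\|q_1-q_2\|$, which implies the claim.

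For a line segment $\gamma=[a,b]$, I would partition $\R^2$ into the interior strip $R_{\mathrm{reg}}$ (where $p(q)$ lies in $\reg(\gamma)$) and the endpoint regions $R_a,R_b$ (where $p(q)$ equals $a$ or $b$). On $R_{\mathrm{reg}}$ the closed-curve analysis applies verbatim. On $R_a$, choose coordinates with $a=0$, tangent $L$ along the $x$-axis, $n_a=(0,1)$; writing $q-a=(x_L,x_n)$, the endpoint formula becomes
\[
v_q(\gamma)\;=\;\frac{1}{\sigma}\cdot\frac{x_n}{\sqrt{x_L^2+x_n^2}}\cdot\bigl(|x_L|+|x_n|\bigr)\cdot e^{-(x_L^2+x_n^2)/\sigma^2}.
\]
I would estimate the gradient quadrant-by-quadrant around $a$, combining the norm inequality $|x_L|+|x_n|\le\sqrt{2}\sqrt{x_L^2+x_n^2}$ with bounds on the Gaussian-profile derivative to obtain $\|\nabla v_q\|\le\sqrt{2}/\sigma$. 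Since the $\ell^1$-factor is continuous (albeit not differentiable) across the coordinate axes, these pieces patch together into a global $(\sqrt{2}/\sigma)$-Lipschitz estimate on $R_a$, and $R_b$ is symmetric.

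For cross-region pairs $(q_1,q_2)$, I would check that the regular and endpoint formulas agree on their common boundary: substituting $x_L=0$ into the endpoint formula yields $(x_n/\sigma)e^{-x_n^2/\sigma^2}$, matching the regular formula at $(0,x_n)$. Hence $v_q(\gamma)$ is continuous across $\partial R_a$ and $\partial R_b$, and the segment $\overline{q_1q_2}$ decomposes at its region crossings into sub-segments each lying inside a single region where the above Lipschitz bound holds; the triangle inequality sums these into the claimed global estimate. The main obstacle is the quadrant-wise gradient calculation in the endpoint region, where the interaction of the $\ell^1$-factor with the signed angular cosine factor requires a careful trigonometric analysis; the excluded first case is precisely the configuration for which this argument breaks, because there the endpoint formula on one side of the medial axis produces a value of order $1/\sqrt{e}$ while the opposite-side formula evaluates to $0$, a jump no $O(1/\sigma)$ Lipschitz constant in $\|q_1-q_2\|$ can bridge.
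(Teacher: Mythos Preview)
Your approach differs from the paper's. The paper does a direct case analysis on the signs of $v_1,v_2$ and on whether the segment $\overline{q_1q_2}$ crosses $\gamma$ or the tangent line at an endpoint; in each case it bounds $|v_1-v_2|$ by elementary inequalities (triangle inequality, Cauchy--Schwarz, and the $1/\sigma$-Lipschitz bound for $t\mapsto\frac{t}{\sigma}e^{-t^2/\sigma^2}$). You instead aim to show $q\mapsto v_q(\gamma)$ is globally $(\sqrt 2/\sigma)$-Lipschitz via a gradient bound. For closed curves your argument is clean and correct, and it recovers the sharper $1/\sigma$ constant the paper notes in its Remark.

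The endpoint-region step, however, has a genuine gap: the bound $\|\nabla v_q\|\le\sqrt 2/\sigma$ you are aiming for does \emph{not} hold. In polar coordinates $(r,\phi)$ centered at the endpoint $a$ (with $\phi$ measured from the tangent line, $x_L\le 0$, $x_n\ge 0$), your endpoint formula becomes $v_q=\frac{r}{\sigma}h(\phi)\,e^{-r^2/\sigma^2}$ with $h(\phi)=\sin\phi(\sin\phi+\cos\phi)$, and hence
\[
\|\nabla v_q\|=\frac{1}{\sigma}\,e^{-r^2/\sigma^2}\sqrt{h(\phi)^2\bigl(1-2r^2/\sigma^2\bigr)^2+h'(\phi)^2}.
\]
At $\phi=\pi/8$ one has $h=\tfrac12$ and $h'=\sin\tfrac{\pi}{4}+\cos\tfrac{\pi}{4}=\sqrt 2$, so as $r\to 0$ the gradient norm tends to $\frac{1}{\sigma}\sqrt{\tfrac14+2}=\frac{3}{2\sigma}>\frac{\sqrt 2}{\sigma}$. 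Concretely, for the segment $[(0,0),(L,0)]$ with $\sigma=1$, the pair $q_1=(-0.0924,0.0383)$, $q_2=(-0.0919,0.0483)$ (both nearest to $a=(0,0)$, same side of the tangent line, same side of the medial axis, so \emph{not} in the excluded configuration) gives $|v_1-v_2|/\|q_1-q_2\|\approx 1.49$. The culprit is the angular variation of the factor $\frac{x_n}{r}(|x_L|+|x_n|)$; your inequality $|x_L|+|x_n|\le\sqrt 2\,r$ controls the radial size of $v_q$ but says nothing about its tangential derivative. So the ``quadrant-wise gradient calculation'' you flag as the main obstacle cannot succeed with constant $\sqrt 2/\sigma$.

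A secondary issue: Example~\ref{e1} gives instances, not a classification. For example any convex arc also has $\delta(\gamma)=\infty$, so reducing to ``closed curve or line segment'' is not justified; the endpoint analysis would still be needed for such curves.

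(For context: the paper's own handling of the same-side endpoint sub-case in its Case~3 is a one-line appeal to ``the same as Case~2 plus Cauchy--Schwarz,'' which is equally silent on this angular variation; the numerical example above is not specific to your gradient strategy.)
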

\begin{proof}
Let $p_1= \argmin_{p\in \gamma} \|q_1-p\|$ and $p_2 = \argmin_{p\in \gamma} \|q_2-p\|$. We prove the theorem in four cases. \\
{\bf Case 1.} 
$v_1(\gamma) v_2(\gamma) \leq 0$, the line segment $\overline{q_1q_2}$ passes through $\gamma$ and  $p_1$ and $p_2$ are not endpoints (see Figure \ref{fig:q-cases}(a)). 
Let $p$ be the intersection of the segment $\overline{q_1q_2}$ with $\gamma$. Then
\[
\begin{array}{ll}
|v_1(\gamma)- v_2(\gamma)| \!\!\! & 
= \bigg| \frac{1}{\sigma}\langle q_1-p_1, n_{p_1}(q_1)\rangle e^{- \frac{\|q_1-p_1\|^2}{\sigma^2}} - \frac{1}{\sigma}\langle q_2-p_2, n_{p_2}(q_2)\rangle e^{- \frac{\|q_2-p_2\|^2}{\sigma^2}} \bigg| \vspace{0.1cm} \\ &
\leq \frac{1}{\sigma} \|q_1-p_1\| e^{- \frac{\|q_1-p_1\|^2}{\sigma^2}} + \frac{1}{\sigma} \|q_2-p_2\| e^{- \frac{\|q_2-p_2\|^2}{\sigma^2}} \vspace{0.2cm} \\ &
\leq \frac{1}{\sigma} (\|q_1-p_1\| + \|q_2-p_2\|) 
\leq \frac{1}{\sigma} (\|q_1-p\| + \|q_2-p\|) 
= \frac{1}{\sigma} \|q_1-q_2\|.
\end{array}
\]
{\bf Case 2.} 
$v_1(\gamma) v_2(\gamma) \geq 0$, the line segment $\overline{q_1q_2}$ does not pass through $\gamma$ and  $p_1$ and $p_2$ are not endpoints (see Figure \ref{fig:q-cases}(b)). 
Without loss of generality we may assume that both $v_1(\gamma)$ and $v_2(\gamma)$ are non-negative. In this case, $q_1-p_1$ and $q_2-p_2$ are parallel to $n_{p_1}(q_1)$ and $n_{p_2}(q_2)$ respectively. Therefore, $\langle q_1-p_1, n_{p_1}(q_1)\rangle= \|q_1-p_1\|$ and $\langle q_2-p_2, n_{p_2}(q_2)\rangle= \|q_2-p_2\|$. Utilizing the fact that the function $f(x)= \frac{x}{\sigma} e^{-x^2/ \sigma^2}$ is Lipschitz with constant $\frac{1}{\sigma}$, we get 
\[
|v_1(\gamma)- v_2(\gamma)| \leq \frac{1}{\sigma} |\|q_1-p_1\| - \|q_2-p_2\||.
\]
Now applying triangle inequality we infer  $\|q_1-p_1\| \leq \|q_1-p_2\| \leq \|q_1-q_2\| + \|q_2-p_2\|$ and so by symmetry, $|\|q_1-p_1\| - \|q_2-p_2\|| \leq \|q_1-q_2\|$. Therefore, $|v_1(\gamma)- v_2(\gamma)| \leq \frac{1}{\sigma} \|q_1-q_2\|$. 
\\
{\bf Case 3.} Endpoints. 
Let $\ell$ be the tangent line at an endpoint $p$ on $\gamma$, $n_p$ be its unique unit normal vector and let $q_1$ and $q_2$ be in different sides of $\ell$ and $p_1 = p_2 = p$ (see Figure \ref{fig:q-cases}(c)). Assume $q$ is the intersection of the segment $\overline{q_1q_2}$ with $\ell$. Then $\langle n_p, q-p \rangle = 0$ and so noting that $n_{p_1}(q_1)=n_{p_2}(q_2)=n_p$ we have $\langle q_1-p, n_p \rangle = \langle q_1-q, n_p \rangle$ and $\langle q_2-p, n_p \rangle = \langle q_2-q, n_p \rangle$. Therefore, 
\[
\begin{array}{ll}
|v_1(\gamma)- v_2(\gamma)| & 
= \bigg| \frac{1}{\sigma} \langle n_{p}, \frac{q_1-p}{\|q_1-p\|} \rangle \|q_1\|_{\infty,p} \, e^{\frac{\|q_1-p\|^2}{\sigma^2}} - \frac{1}{\sigma} \langle n_{p}, \frac{q_2-p}{\|q_2-p\|} \rangle \|q_2\|_{\infty,p} \, e^{\frac{\|q_2-p\|^2}{\sigma^2}} \bigg| \vspace{0.2cm} \\ &
= \frac{1}{\sigma} \bigg| \Big\langle n_p, \frac{\|q_1\|_{\infty,p}}{\|q_1\|_{2,p}} e^{- \frac{\|q_1-p\|^2}{\sigma^2}} (q_1-q)- \frac{\|q_2\|_{\infty,p}}{\|q_2\|_{2,p}} e^{- \frac{\|q_2-p\|^2}{\sigma^2}} (q_2-q) \Big\rangle \bigg| \vspace{0.2cm} \\ &
= \frac{1}{\sigma} \bigg\| \frac{\|q_1\|_{\infty,p}}{\|q_1\|_{2,p}} e^{- \frac{\|q_1-p\|^2}{\sigma^2}} (q_1-q)- \frac{\|q_2\|_{\infty,p}}{\|q_2\|_{2,p}} e^{- \frac{\|q_2-p\|^2}{\sigma^2}} (q_2-q) \bigg\| \vspace{0.2cm} \\ &
\leq \frac{1}{\sigma} (\|q_1-q\| + \|q_2-q\|) 
= \frac{1}{\sigma} \|q_1-q_2\|.
\end{array}
\]
If $q_1$ and $q_2$ are in one side of $\ell$ and $p_1 = p_2 = p$, the proof is the same as in Case 2. We only need to apply Cauchy-Schwarz inequality. 
\\
{\bf Case 4.} 
The case where $p_1$ is an endpoint but $p_2$ is not can be gained from a combination of above cases. Basically, choose a point $q$ on the line segment $\overline{q_1q_2}$ so that $q-p_1$ is parallel to $n_{p_1}$ and then use the triangle inequality. 
\end{proof}

\begin{figure}[t] 
\includegraphics{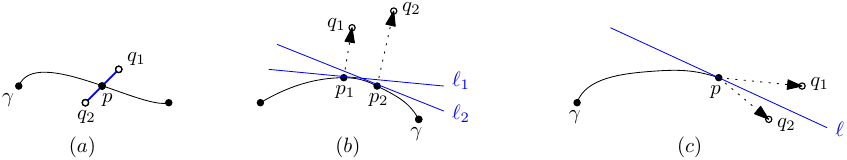}
\caption{\label{fig:q-cases}$q_1$ and $q_2$ in different cases}
\end{figure}

\begin{theorem}[Landmark stability II] \label{t4}
Let $\gamma \in \Gsim$ and $q_1, q_2$ be two points in $\mathbb{R}^2$ not satisfying the second case mentioned before Theorem \ref{t3}. If $\delta(\gamma)< \infty$, $\epsilon \leq \frac{\delta(\gamma)}{4}$ is an arbitrary positive real number and $\sigma \leq \delta(\gamma) / (4(1+ \sqrt{\ln(2/\epsilon)}))$, then \vspace{-1mm}
\[
|v_1(\gamma)- v_2(\gamma)|\leq \max\{\epsilon, \frac{1}{\sigma}\|q_1-q_2\|\}.
\]
\end{theorem}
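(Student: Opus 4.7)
The plan is to extend the four-case analysis of Theorem~\ref{t3}. Those four cases are purely local and never invoke $\delta(\gamma) = \infty$, so whenever the $\argmin$ behaviour of $q_1, q_2$ on $\gamma$ matches one of them, the same arguments yield $|v_1(\gamma) - v_2(\gamma)| \leq \tfrac{\sqrt{2}}{\sigma}\|q_1 - q_2\| \leq \tfrac{2}{\sigma}\|q_1 - q_2\|$, which the $\max$ in the conclusion absorbs. The only configuration not covered, and the one that is forced by $\delta(\gamma) < \infty$, is when $q_1$ and $q_2$ sit on opposite sides of $\SMA(\gamma)$ with $\argmin$ points on different branches of $\gamma$; the excluded ``second case'' before Theorem~\ref{t3} is precisely what rules out the tangent-endpoint pathology here. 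I would handle this remaining configuration via the exponential decay of $v_q^\sigma$.

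The first step is a geometric lemma: every $q^* \in \SMA(\gamma)$ lies at distance at least $\delta(\gamma)/2$ from $\gamma$. Let $p, p' \in \reg(\gamma)$ be two $\argmin$ witnesses at common distance $d$ satisfying the signed condition of Definition~\ref{d5}. The open disk $B(q^*, d)$ is disjoint from $\gamma$, so $\mathrm{int}(\overline{pp'}) \cap \gamma = \emptyset$, and the signed condition in Definition~\ref{d5} coincides with the one in Definition~\ref{d3} at regular points, so $(p,p')$ is admissible in the infimum defining $\slfs$. Hence $\|p - p'\| \geq \delta(\gamma)$, and since $p, p'$ lie on $\partial B(q^*, d)$, $d \geq \|p-p'\|/2 \geq \delta(\gamma)/2$.

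Given the lemma, fix any $q^* \in \SMA(\gamma) \cap \overline{q_1 q_2}$ and split on $\|q_1 - q_2\|$. If $\|q_1 - q_2\| > \delta(\gamma)/4$, the hypothesis on $\sigma$ gives $\tfrac{2}{\sigma}\|q_1 - q_2\| > 2(1 + \sqrt{\ln(2/\eps)}) \geq 2$, which beats the crude upper bound $|v_1(\gamma)| + |v_2(\gamma)| \leq 2/\sqrt{e} < 2$, so the $\max$ clause alone covers $|v_1 - v_2|$. If instead $\|q_1 - q_2\| \leq \delta(\gamma)/4$, the triangle inequality together with the lemma gives $\|q_i - p_i\| \geq \delta(\gamma)/2 - \delta(\gamma)/4 = \delta(\gamma)/4$ for each $i$. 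Writing $A_i = \|q_i - p_i\|/\sigma$, the hypothesis on $\sigma$ yields $A_i \geq 1 + \sqrt{\ln(2/\eps)}$, and both the regular and endpoint formulas for $v_q^\sigma$ give $|v_i(\gamma)| \leq \sqrt{2}\, A_i\, e^{-A_i^2}$ (the $\sqrt{2}$ absorbing the $\|q\|_{1,p}/\|q\|_{2,p} \leq \sqrt{2}$ factor in the endpoint case).

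The main obstacle is the scalar inequality $\sqrt{2} A e^{-A^2} \leq \eps/2$ whenever $A \geq 1 + L$ with $L := \sqrt{\ln(2/\eps)}$. Evaluating at $A = 1 + L$ and using $e^{-L^2} = \eps/2$,
\[
\sqrt{2}\,(1+L)\,e^{-(1+L)^2} \;=\; \tfrac{\eps}{\sqrt{2}\,e}\,(1+L)\,e^{-2L},
\]
and since $(1+L)\,e^{-2L}$ has derivative $-(1+2L)e^{-2L} \leq 0$ for $L \geq 0$ with value $1$ at $L = 0$, the right-hand side is at most $\eps/(\sqrt{2}\,e) \leq \eps/2$. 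Hence $|v_1(\gamma) - v_2(\gamma)| \leq |v_1(\gamma)| + |v_2(\gamma)| \leq \eps$, completing the proof.
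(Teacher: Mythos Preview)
Your argument is correct and follows the same strategy as the paper: reduce via Theorem~\ref{t3} to the configuration where $q_1,q_2$ lie on opposite sides of $\SMA(\gamma)$, and then exploit the Gaussian decay of $v_q^\sigma$ together with the fact that points near $\SMA(\gamma)$ are far from $\gamma$.  The paper's version records the same distance bound you prove as a lemma (that $\mathrm{dist}(q^*,\gamma)\ge \delta(\gamma)/2$ for $q^*\in\SMA(\gamma)$) inside its observations (O1)--(O4), and uses the inequality $\tfrac{x}{\sigma}e^{-x^2/\sigma^2}\le e^{-(x/\sigma-1)^2}$ where you instead evaluate $\sqrt{2}Ae^{-A^2}$ directly at $A=1+L$.

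The one organizational difference is your choice of threshold: the paper splits at $\|q_1-q_2\|=\epsilon$ and then runs four sub-cases on whether $x,y$ exceed $\delta(\gamma)/4$, whereas you split at $\|q_1-q_2\|=\delta(\gamma)/4$, which collapses the analysis to two cases (the crude bound $|v_1|+|v_2|\le 2/\sqrt{e}<2$ when $\|q_1-q_2\|$ is large, and pure decay when it is small).  Your version is tidier; the paper's finer case split does not buy a sharper constant.  One small technical point: in your lemma you take the $\SMA$ witnesses $p,p'\in\reg(\gamma)$, but Definition~\ref{d5} does not require regularity; this is easily patched by approximating with nearby regular points, and the paper's own (O1) makes the same tacit assumption.
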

\begin{proof}
By Theorem \ref{t3} it is enough to consider only the case where there is a signed medial axes, say $M$, and $q_1$ and $q_2$ are in different sides of $M$ (the case they are in same side of $M$ is included in Theorem \ref{t3}). For the sake of convenience assume $x= \|q_1-p_1\|$, $y= \|q_2-p_2\|$ and let $z$ be either $x$ or $y$. The proof is based on the following observations. 
\begin{itemize} 
\item[{(O1)}] If $z \leq \frac{\delta(\gamma)}{4}$, then $\frac{z}{\sigma} e^{-\frac{z^2}{\sigma^2}} \leq \frac{z}{\sigma} \leq \frac{\delta(\gamma)}{4 \sigma}$. 
\item[{(O2)}] If $z \geq \frac{\delta(\gamma)}{4}$, then $\frac{z}{\sigma} \geq \frac{\delta(\gamma)}{4 \sigma} \geq 1+ \sqrt{\ln(2/\epsilon)}$. Hence, Employing the inequality $\frac{z}{\sigma} e^{-\frac{z^2}{\sigma^2}} \leq e^{\frac{2z}{\sigma}-1} e^{-\frac{z^2}{\sigma^2}}= e^{-(\frac{z}{\sigma}-1)^2}$ we get $\frac{z}{\sigma} e^{-\frac{z^2}{\sigma^2}} \leq \frac{\epsilon}{2}$. 
\end{itemize}
Now if $\|q_1-q_2\| \leq \epsilon$, noting that $\epsilon \leq \frac{\delta(\gamma)}{4}$, we have $x, y \geq \frac{3 \delta(\gamma)}{4} \geq \frac{\delta(\gamma)}{4}$. Thus, by (O2), we get $|v_1(\gamma)- v_2(\gamma)|= \dfrac{x}{\sigma} e^{-\frac{x^2}{\sigma^2}} + \dfrac{y}{\sigma} e^{-\frac{y^2}{\sigma^2}} \leq \epsilon$. \\
Otherwise, $\|q_1-q_2\| \geq \epsilon$ and we encounter four cases (see Figure \ref{fig:SMA-case} (Left)). 
\\
{\bf Case 1.}
If $x, y \leq \frac{\delta(\gamma)}{4}$, then by (O1), 
$|v_1(\gamma)- v_2(\gamma)| = \frac{x}{\sigma} e^{-\frac{x^2}{\sigma^2}} + \frac{y}{\sigma} e^{-\frac{y^2}{\sigma^2}} \leq \frac{\delta(\gamma)}{2 \sigma} \leq \frac{1}{3 \sigma} \|q_1-q_2\|$. \\
{\bf Case 2.}
If $x\geq \frac{\delta(\gamma)}{4}$ and $y \leq \frac{\delta(\gamma)}{4}$, then applying (O1) and (O2) we infer 

$|v_1(\gamma)- v_2(\gamma)| = \dfrac{x}{\sigma} e^{-\frac{x^2}{\sigma^2}} + \dfrac{y}{\sigma} e^{-\frac{y^2}{\sigma^2}} \leq  \frac{\epsilon}{2} + \frac{\delta(\gamma)}{4 \sigma}\leq \Big(1+\sqrt{\ln(\frac{2}{\epsilon})} \, \Big)+\frac{\delta(\gamma)}{4 \sigma} \leq \frac{\delta(\gamma)}{2 \sigma} \leq  \frac{2}{3 \sigma} \|q_1-q_2\|.$ \\
{\bf Case 3.} 
The case $x \leq \frac{\delta(\gamma)}{4}$ and $y \geq \frac{\delta(\gamma)}{4}$ is the same as Case 2. \\
{\bf Case 4.}
Finally, if $x\geq \frac{\delta(\gamma)}{4}$ and $y \geq \frac{\delta(\gamma)}{4}$, by (O2),  
$|v_1(\gamma)- v_2(\gamma)| = \frac{x}{\sigma} e^{-\frac{x^2}{\sigma^2}} + \frac{y}{\sigma} e^{-\frac{y^2}{\sigma^2}} \leq \epsilon$. 
\end{proof}

\begin{figure}[t]
\includegraphics[width=\textwidth]{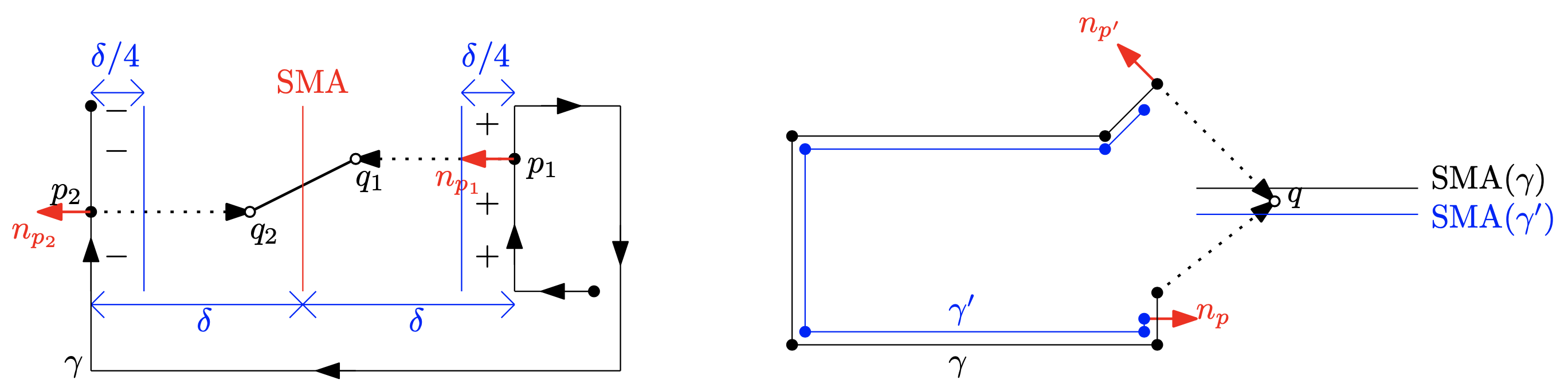}
\caption{\label{fig:SMA-case}Left: The case that there is a $\SMA$ and $q_1$ and $q_2$ in different sides of $\SMA(\gamma)$. Right: The case that a landmark point $q$ lies between $\SMA(\gamma)$ and $\SMA(\gamma')$.}
\end{figure}

\subsection{Stability of Curves} \label{sec:curve-stab}

We next would like to have a continuity-like property for $\dQ^{\sigma}$ in terms of perturbations of curves. This property considers similar curves, where one is a small perturbation of the other, and shows that these curves will necessarily also have a small $\dQ^{\sigma}$ distance.  In particular, we define the scale of these perturbations in terms of the \Frechet and Hausdorff distance; hence this shows consistency between $\dQ^{\sigma}$ and these distances, at least at small scales.  At large scales, and for certain classes of curves, such a result is not possible~\cite{indyk1998approximate,DS17}, so we need to condition the classes of curves for which it holds.  

Moreover, non-closed curves create subtle issues around endpoints.  
The example in Figure \ref{fig:SMA-case} (Right) shows that, without controlling behavior of endpoints, we may make $\gamma$ arbitrarily close to $\gamma'$ in \Frechet distance, whereas $\dQ^{\sigma}(\gamma, \gamma')$ is possibly $1/\sqrt{e}$, where $Q = \{q\}$. This is the case where $q$ lies between $\SMA(\gamma)$ and $\SMA(\gamma')$. So we cannot get the desired inequality ($\dQ^\sigma(\gamma,\gamma') \leq \dF(\gamma,\gamma')$) for this case. 
Thus, the landmarks which fall between the signed medial axes may cause otherwise similar curves to have different signatures.  For a large domain $\Omega$ (and especially with $\sigma$, relatively small) these should be rare, and then $\dQ^\sigma$ which averages over these landmarks should not be majorly effected.  
We formalize when this is the case in the next theorem, and its corollary which shows that if two curves $\gamma,\gamma'$ are closed (Corollary \ref{c4}), then $\dQ^\sigma(\gamma,\gamma') \leq \frac{1}{\sigma}\dF(\gamma,\gamma')$. We also obtain a $(1+\eps)$-relative error inequality $\dQ^{\sigma}(\gamma, \gamma') \leq \frac{1}{\sigma} (1+\eps) \dF(\gamma, \gamma')$ in Theorem \ref{thm:eps} when $\gamma,\gamma'$ are not necessarily closed. 

\begin{theorem}[Stability under \Frechet perturbation of curves] \label{t5}
Let $\gamma, \gamma' \in \Gsim$ and $q_i \in \mathbb{R}^2$. If one of the following three conditions hold, then  
$|v_i(\gamma)- v_i(\gamma')|\leq \frac{1}{\sigma} \dF(\gamma, \gamma')$. \\
(1) $v_i(\gamma)v_i(\gamma') \geq 0$; \\
(2) $v_i(\gamma)v_i(\gamma') \leq 0$ and $q_i$ is on a line segment $\overline{\gamma(t) \gamma'(\alpha(t))}$, for some $t$, of the alignment between $\gamma$ and $\gamma'$ achieving the optimal \Frechet distance;  \\
(3) $q_i$ is far enough from both curves: $\min_{p \in \gamma} \|q_i - p\|, \min_{p' \in \gamma'} \|q_i - p'\| \geq \sigma (1 + \sqrt{\ln(2\sigma / \dF(\gamma,\gamma'))})$. 
\end{theorem}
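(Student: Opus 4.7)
My plan is to handle the three cases separately, largely by adapting the machinery from Theorem \ref{t3}, where the analogous estimates were obtained when moving a landmark point; here I instead move the underlying curve, so the roles of $q_1,q_2$ in Theorem \ref{t3} are played by the closest points $p\in\gamma$ and $p'\in\gamma'$ to the single landmark $q_i$. Let me set $p=\argmin_{r\in\gamma}\|q_i-r\|$, $p'=\argmin_{r\in\gamma'}\|q_i-r\|$, and $D=\dF(\gamma,\gamma')$, and fix an optimal reparameterization $\alpha$.

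For case (1), same signs means both feature values correspond to $q_i$ lying on the same side of each curve. Essentially by Cauchy--Schwarz/orthogonal-projection, $\langle n_p(q_i),q_i-p\rangle$ equals $\pm\|q_i-p\|$ (and similarly for $p'$), so $|v_i(\gamma)-v_i(\gamma')|$ reduces to the absolute difference of $f(\|q_i-p\|)$ and $f(\|q_i-p'\|)$, where $f(x)=\tfrac{x}{\sigma}e^{-x^2/\sigma^2}$ is $\tfrac{1}{\sigma}$-Lipschitz (as used in Theorem \ref{t3}, Case 2). The key geometric step is the bound $\big|\|q_i-p\|-\|q_i-p'\|\big|\leq D$: picking $t^\ast$ with $p=\gamma(t^\ast)$ (up to reparameterization), the corresponding point $\gamma'(\alpha(t^\ast))$ is at distance $\leq D$ from $p$, so $\|q_i-p'\|\leq\|q_i-\gamma'(\alpha(t^\ast))\|\leq\|q_i-p\|+D$, and symmetrically. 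Combining gives $|v_i(\gamma)-v_i(\gamma')|\leq D/\sigma$. The endpoint subcase requires the extra $\|q\|_{1,p}/\|q\|_{2,p}\leq\sqrt{2}$ factor already exhibited in Theorem \ref{t3}, Case 3, which is where the $\sqrt{2}$ in the statement originates.

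For case (2), sign change with $q_i$ on the alignment segment $\overline{\gamma(t)\gamma'(\alpha(t))}$, I mimic Case 1 of Theorem \ref{t3}: since $q_i$ lies on this segment, and $p$ (resp.\ $p'$) is the closest point on $\gamma$ (resp.\ $\gamma'$) to $q_i$, one has $\|q_i-p\|\leq\|q_i-\gamma(t)\|$ and $\|q_i-p'\|\leq\|q_i-\gamma'(\alpha(t))\|$, which sum to $\|\gamma(t)-\gamma'(\alpha(t))\|\leq D$. Opposite signs let me triangle-bound $|v_i(\gamma)-v_i(\gamma')|\leq \tfrac{1}{\sigma}(\|q_i-p\|+\|q_i-p'\|)\leq D/\sigma$, again absorbing the endpoint case into the $\sqrt{2}$ factor as in Theorem \ref{t3}.

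For case (3), I do not need a cancellation argument at all: I bound each term separately by exponential decay. Exactly as in observation (O3) of Theorem \ref{t4}, the threshold $x=\min_{p\in\gamma}\|q_i-p\|\geq \sigma(1+\sqrt{\ln(2\sigma/D)})$ yields $\tfrac{x}{\sigma}e^{-x^2/\sigma^2}\leq e^{-(x/\sigma-1)^2}\leq e^{-\ln(2\sigma/D)}=\tfrac{D}{2\sigma}$, so $|v_i(\gamma)|\leq D/(2\sigma)$; the same bound holds for $|v_i(\gamma')|$, and the triangle inequality gives $|v_i(\gamma)-v_i(\gamma')|\leq D/\sigma\leq\sqrt{2}D/\sigma$.

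The main obstacle I anticipate is not the core estimates (which recycle the Lipschitz/exponential-decay ideas already developed), but the bookkeeping around endpoints: when either $p$ or $p'$ is an endpoint of its curve, the feature value uses the $\|q\|_{1,p}$ expression, and in case (1) one has to argue that a mixed situation (one endpoint, one not) can still be analyzed by inserting an intermediate auxiliary point (as in Case 4 of Theorem \ref{t3}), and that in case (2) the segment crossing interpretation still makes sense when the curve's closest point is a terminal vertex. These are the places where the $\sqrt{2}$ rather than $1$ in the constant is forced, and where I will need to appeal explicitly to the coordinate system centered at the endpoint and the inequality $1\leq\|q\|_{1,p}/\|q-p\|\leq\sqrt{2}$.
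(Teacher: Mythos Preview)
Your proposal is correct and follows essentially the same approach as the paper's own proof: case (1) via the $\tfrac{1}{\sigma}$-Lipschitz property of $x\mapsto \tfrac{x}{\sigma}e^{-x^2/\sigma^2}$ together with the triangle inequality $\big|\|q_i-p\|-\|q_i-p'\|\big|\leq \dF(\gamma,\gamma')$, case (2) via the sum bound $\|q_i-p\|+\|q_i-p'\|\leq \dF(\gamma,\gamma')$ when $q_i$ sits on an alignment segment, and case (3) via the exponential-decay estimate $\tfrac{x}{\sigma}e^{-x^2/\sigma^2}\leq e^{-(x/\sigma-1)^2}$ from Theorem~\ref{t4}, with the $\sqrt{2}$ coming from the endpoint $l^1$-factor exactly as you note. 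The only cosmetic difference is that in case (1) the paper first bounds by $\|\gamma-\gamma'\|_\infty$ for arbitrary parametrizations and then takes the infimum, whereas you fix the optimal $\alpha$ at the outset; both are equivalent.
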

\begin{proof}
Let $p= \argmin_{p\in \gamma} \|q_i-p\|$ and $p'= \argmin_{p\in \gamma'} \|q_i-p\|$. 

(1) Let $p=\gamma(t)$ and $p'= \gamma(t')$ for some $t,t' \in [0,1]$ and without loss of generality assume that $t \leq t'$. Then $\|q_i-p'\| \leq \|q_i- \gamma'(t)\|$ and so 
\[
\quad\qquad \|q_i-p'\| - \|q_i-p\| \leq \|q_i- \gamma'(t)\| - \|q_i-p\| \leq \|p- \gamma'(t)\|= \|\gamma(t)- \gamma'(t)\| \leq \|\gamma - \gamma'\|_{\infty}.
\]
Similarly, $\|q_i-p\| - \|q_i-p'\| \leq \|\gamma - \gamma'\|_{\infty}$. 
Now using the fact that the function $f(x)= \frac{x}{\sigma} e^{-\frac{x^2}{\sigma^2}}$ is Lipschitz, considering $l^{\infty}$-norm at endpoints, we get 
\[
|v_i(\gamma)- v_i(\gamma')|\leq \frac{1}{\sigma} |\|q_i-p'\| - \|q_i-p\|| \leq \frac{1}{\sigma} \|\gamma - \gamma'\|_{\infty}.
\]
This shows that for two arbitrary reparametrizations $\alpha$ and $\alpha'$ of $[0,1]$ we have 
$
|v_i(\gamma \circ \alpha)- v_i(\gamma' \circ \alpha')|\leq  \frac{1}{\sigma} \|\gamma \circ \alpha - \gamma' \circ \alpha'\|_{\infty}.
$
Noting that a reparametrization of a curve does not change either the range or the direction of the curve, we get $|v_i(\gamma \circ \alpha)- v_i(\gamma'\circ \alpha')|= |v_i(\gamma)- v_i(\gamma')|$. Thus taking the infimum over all reparametrizations $\alpha$ and $\alpha'$ we obtain $|v_i(\gamma)- v_i(\gamma')| \leq \frac{1}{\sigma} \dF(\gamma, \gamma')$. 

(2) Let $r = \dF(\gamma, \gamma')$ and let $q_i$ be on a line segment alignment of $\gamma, \gamma'$ with length at most $r$. So, there are points $a$ and $b$ on $\gamma$ and $\gamma'$ respectively within distance $r$ such that $q_i$ lies on $\overline{ab}$. Hence we have 
$\|p-q_i\| + \|q_i-p'\| \leq \|a-q_i\| + \|b-q_i\| = \|a-b\| \leq r = \dF(\gamma, \gamma')$, and thus 
\[
|v_i(\gamma)- v_i(\gamma')| \leq \frac{1}{\sigma} (\|p-q_i\| + \|q_i -p'\|) \leq \frac{1}{\sigma} \dF(\gamma, \gamma').
\]

(3) This case implies $|v_i(\gamma)|, |v_i(\gamma')| \leq \frac{1}{2\sigma}\dF(\gamma,\gamma')$.   
\end{proof}

\begin{cor} \label{c4}
Let $\gamma, \gamma' \in \Gsim$ be closed curves with both oriented clockwise/counterclockwise. Then $\dQ^{\sigma}(\gamma, \gamma') \leq \frac{1}{\sigma} \dF(\gamma, \gamma')$. 
\end{cor}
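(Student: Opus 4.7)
The plan is to follow the template of Corollary \ref{cor:fix-endpoints}: show that for every landmark $q_i$ at least one of the three hypotheses of Theorem \ref{t5} is satisfied, giving $|v_i(\gamma)-v_i(\gamma')| \leq \tfrac{1}{\sigma}\, d_F(\gamma,\gamma')$, and then observe that this coordinatewise bound passes through the $l^2$-average defining $\dQ^{\sigma}$ unchanged. The improvement of the constant from $\sqrt{2}/\sigma$ to $1/\sigma$ comes for free from the Remark following Theorem \ref{t5}: since $\gamma,\gamma'$ are closed, no landmark ever picks an endpoint as its $\argmin$, so the ``$l^1$-norm at endpoints'' factor $\sqrt{2}$ never arises.

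If $v_i(\gamma) v_i(\gamma') \geq 0$, then Condition (1) of Theorem \ref{t5} applies directly. Otherwise $v_i(\gamma) v_i(\gamma') < 0$, and here I would exploit convexity and matching orientation: on a closed convex oriented curve, the sign of $v_q$ records whether $q$ lies in the interior or exterior of the bounded convex body, so opposite signs mean $q_i$ lies inside exactly one of the two convex bodies. Without loss of generality assume $q_i$ is in the interior of $\gamma$ and in the exterior of $\gamma'$. My goal is then to find an optimal reparametrization $\alpha \in \Pi^\circ$ for $d_F(\gamma,\gamma')$ and a parameter $t^\ast \in [0,1)$ with $q_i \in \overline{\gamma(t^\ast)\,\gamma'(\alpha(t^\ast))}$, which is exactly Condition (2).

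To produce such a $t^\ast$ I would set up the straight-line homotopy
\[
H(s,t) = (1-s)\,\gamma(t) + s\,\gamma'(\alpha(t)), \qquad s \in [0,1],\ t \in [0,1),
\]
with $\alpha$ optimal for the closed-curve \Frechet distance. Each vertical fiber $s \mapsto H(s,t)$ is one of the alignment segments $\overline{\gamma(t)\,\gamma'(\alpha(t))}$, so showing $q_i$ is in the image of $H$ is exactly Condition (2). The winding number of the closed curve $H(s,\cdot)$ about $q_i$ equals $\pm 1$ at $s=0$ (since $q_i$ is interior to the Jordan curve $\gamma$) and equals $0$ at $s=1$ (since $q_i$ is exterior to $\gamma'$). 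Because this winding number is integer-valued and continuous in $s$ whenever $q_i \notin H(s,\cdot)$, it must fail to be defined at some $s^\ast \in (0,1)$, yielding $t^\ast$ with $H(s^\ast,t^\ast) = q_i$.

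The main obstacle is exactly this topological step. Intermediate fibers $H(s,\cdot)$ need not be simple or convex, so the naive ``inside vs.\ outside'' phrasing fails; the winding-number formulation sidesteps this, but depends crucially on the orientations of $\gamma$ and $\gamma'$ matching (which is given), and on choosing $\alpha \in \Pi^\circ$ so that its cyclic gluing point is handled continuously, so that $H$ is genuinely a homotopy of closed curves. Once these subtleties are checked, the remainder of the argument is a direct transcription of Corollary \ref{cor:fix-endpoints} to the closed, cyclically reparametrized setting, and the final inequality $\dQ^{\sigma}(\gamma,\gamma') \leq \tfrac{1}{\sigma}\, d_F(\gamma,\gamma')$ follows by the Euclidean averaging step outlined above.
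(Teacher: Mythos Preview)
Your approach is correct but genuinely different from the paper's in the opposite-sign case. Both proofs split on the sign of $v_i(\gamma)v_i(\gamma')$ and invoke Condition (1) of Theorem \ref{t5} when the signs agree. For the opposite-sign case, however, the paper does \emph{not} verify Condition (2); instead it gives a direct elementary computation that exploits convexity: with $p=\argmin_{x\in\gamma}\|q-x\|$ it shoots the outward normal ray at $p$ until it meets $\gamma'$ at some $p'$, observes (using convexity of $\gamma$) that $p$ is also the nearest point of $\gamma$ to $p'$, and bounds $|v_q(\gamma)-v_q(\gamma')|\le \tfrac{1}{\sigma}\|p-p'\|\le\tfrac{1}{\sigma}\dH(\gamma,\gamma')\le\tfrac{1}{\sigma}\dF(\gamma,\gamma')$. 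Your route instead pushes the template of Corollary \ref{cor:fix-endpoints} through a winding-number argument to certify Condition (2). The paper's argument is shorter and entirely metric, but leans on convexity in an essential way (the normal-foot characterization of nearest points). Your argument is more topological and, once the sign of $v_q$ is identified with interior/exterior via the Jordan curve theorem, does not actually use convexity at all---so it would extend the corollary to arbitrary simple closed curves with matching orientation. One small caveat: Condition (2) is phrased for an \emph{optimal} alignment $\alpha$, which need not be attained; you should run your homotopy with a near-optimal $\alpha$ and pass to the limit, as the proof of Theorem \ref{t5}(2) really only uses that the alignment segment has length at most the \Frechet value realized by $\alpha$.
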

\begin{proof}
Using Condition (1) of Theorem \ref{t5}, it is enough to show that if $q \in A \setminus A'$, then $|v_q(\gamma) - v_q(\gamma')| \leq \frac{1}{\sigma} \dF(\gamma, \gamma')$, where $A, A'$, by Jordan's curve theorem, are the regions bounded by $\gamma, \gamma'$ respectively. The case $q \in A' \setminus A$ comes by symmetry. We claim that $q$ lies on a line segment of the alignment between $\gamma$ and $\gamma'$ achieving the optimal Fr$\rm\acute{e}$chet distance. Then the statement of the theorem follows by Theorem \ref{t5} as Condition (2) will hold true.  

Assume to the contrary that $q$ does not lie on such a line segment. Now consider $\gamma$ and $\gamma'$ in $\mathbb{R}^2 \setminus \{q\}$ and let $x_0$ be a point in $\mathbb{R}^2 \setminus \{q\}$. Using the fact that in path-connected topological spaces, like $\mathbb{R}^2 \setminus \{q\}$, up to isomorphism of groups the fundamental group of the space is independent of the choice of base point $x_0$ (see~\cite{hatcher2005algebraic} for example), we see that in $\mathbb{R}^2 \setminus \{q\}$, $\gamma'$ is contractible (i.e. homotopic to $x_0$) but $\gamma$ is homotopic to $\mathbb{S}^1$, the unit circle. It means that there cannot be a homotopy between $\gamma$ and $\gamma'$ since the homotopy relation is an equivalence relation and $\mathbb{S}^1$ is not contractible, as their fundamental groups are not isomorphic. 

On the other hand, let $r = \dF(\gamma, \gamma')$ and let $\alpha$ be a reparametrization achieving the optimal alignment for Fr$\rm\acute{e}$chet distance. It means that the mapping $F:[0,1] \times [0,1] \to \mathbb{R}^2$ defined by $F(s,t) = s (\gamma \circ \alpha)(t) + (1-s) \gamma'(t)$ is a straight line homotopy between $\gamma\circ \alpha$ and $\gamma'$, and by assumption on $q$ we know that this homotopy occurs in $\mathbb{R}^2 \setminus \{q\}$.  Since reparametrizations do not change the homotopy class of curves, we observe that there is a homotopy between $\gamma$ and $\gamma'$ in $\mathbb{R}^2 \setminus \{q\}$, which is a contradiction. 
\end{proof}

\begin{theorem} \label{thm:eps}
Let $\gamma, \gamma' \in \Gsim$, $r = \dF(\gamma, \gamma') > 0$, $\eps >0$ and $\sigma \leq \eps^2 r^2/2$. Then 
\[
\dQ^{\sigma}(\gamma, \gamma') \leq \frac{1}{\sigma} (1+\eps) \dF(\gamma, \gamma'). 
\]
\end{theorem}
\begin{proof} 
Let $q_i \in Q$, $p= \argmin_{p\in \gamma} \|q_i-p\|$ and $p'= \argmin_{p\in \gamma'} \|q_i-p\|$. If $\|q_i-p\| \leq \eps r/2$, then considering the fact that $\gamma'$ lives in the $r$-neighborhood of $\gamma$ we have $\|q_i-p'\| \leq \eps r/2 + r$. Hence, $|v_i(\gamma)| \leq \frac{1}{\sigma} \|q_i-p\| \leq \frac{\eps/2}{\sigma} r$, and similarly, $|v_i(\gamma')| \leq \frac{1}{\sigma} \|q_i-p' \| \leq \frac{(1+\eps/2)}{\sigma} r$. Consequently, $|v_i(\gamma) - v_i(\gamma')| \leq \frac{1}{\sigma} (1+\eps) \dF(\gamma, \gamma')$. The case $\|q_i-p' \| \leq \eps r/2$ comes by symmetry. 

On the other hand, let $\|q_i-p\| > \eps r/2$ and $\|q_i-p' \| > \eps r/2$. Then using the inequality $\sigma \leq \eps^2 r^2/2$ along with the fact that the function $f(x) = \frac{x}{\sigma} e^{-x^2/\sigma^2}$ is decreasing on the interval $[\sqrt{\sigma/2}, \infty)$, we gain 
\[
|v_i(\gamma) - v_i(\gamma')| \leq |v_i(\gamma)| + |v_i(\gamma')| \leq f(\|q_i-p\|) + f(\|q_i-p'\|) \leq 2 f(\eps r/2) \leq \frac{\eps}{\sigma} r.
\]
This completes the proof. 
\end{proof}

\subsection{Interleaving Bounds for $l^\infty$ Variants}
\label{sec:interleaving}

Using the $l^\infty$ variants, we can show a stronger interleaving property.  
Let $\Omega$ be a bounded domain in $\mathbb{R}^2$. Let $\diam(\Omega) = \sup_{x,y \in \Omega} \|x-y\|$ be the diameter of $\Omega$. We also denote by $\Gsim_{\Omega}$ the subset of $\Gsim$ containing all curves with image in $\Omega$.  
In Section \ref{app:Hausdorff} we show if $\gamma,\gamma' \in \Gamma_\Omega$ and $Q$ is uniform on a domain $\Omega$, then $\dH(\gamma,\gamma') = {\dQmD}^{,\infty}(\gamma,\gamma')$.  
The signed variant $\dQ^{\sigma, \infty}$ is more related to $\dF$, but it is difficult to show an interleaving result in general because if a curve cycles around multiple times, its image may not significantly change, but its \Frechet distance does.  However, by appealing to a connection to the Hausdorff distance, and then restricting to closed and convex (Corollary \ref{c6}) or $\kappa$-bounded~\cite{AKW04} curves (Corollary \ref{c7}), we can still achieve interleaving bounds. 

We first focus on closed curves, so $\slfs$ is infinite, and there are no boundary issues; thus it is best to set $\sigma$ sufficiently large so the $\exp(-\|p-q\|^2/\sigma^2)$ term in $v_q$ goes to $1$ and can be ignored. Regardless, $\dQ^{\sigma} \leq 1/\sqrt{2e}$.  Note that $v_q$ and hence $\dQ$ has a $\frac{1}{\sigma}$ factor, so those terms in the expressions cancel out.  

\begin{lemma} \label{lem:H-lb}
Assume that $Q$ is a uniform measure on $\Omega$ and $\sigma$ is sufficiently large. Let $\gamma, \gamma' \in \Gsim_{\Omega}$ be two closed curves such that $\dH(\gamma, \gamma') \leq \frac{\sigma}{\sqrt{2e}}$.  Then $\frac{1}{\sigma} \dH(\gamma, \gamma') \leq \dQ^{\sigma, \infty}(\gamma, \gamma')$. 
\end{lemma}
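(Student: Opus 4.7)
The plan is to exhibit a single landmark $q^\star$ whose coordinate alone witnesses a gap of size at least $\tfrac{1}{\sigma}\,\dH(\gamma,\gamma')$ between $v_Q(\gamma)$ and $v_Q(\gamma')$; because $\dQ^{\sigma,\infty}$ is the supremum of $|v_q(\gamma)-v_q(\gamma')|$ over $q$ in the support of $\mu$, this immediately yields the lemma. Write $h = \dH(\gamma,\gamma')$. Both curves are compact (continuous images of $[0,1]$), so the outer supremum defining $\dH$ is attained; up to swapping the roles of $\gamma$ and $\gamma'$ we may assume that some $a^\star \in \gamma$ realises $\min_{b \in \gamma'}\|a^\star - b\| = h$. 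Take $q^\star = a^\star$; since $q^\star \in \gamma \subseteq \Omega$, it lies in $\R_\mu^2$ and is a legal landmark.

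Next I would evaluate $v_{q^\star}$ on both curves. On $\gamma$ the closest point to $q^\star$ is $q^\star$ itself, so $q^\star - p = 0$ and Definition \ref{d6} gives $v_{q^\star}(\gamma) = 0$. On $\gamma'$, let $p' = \argmin_{b \in \gamma'}\|q^\star - b\|$; by construction $\|q^\star-p'\|=h$, and because $\gamma'$ is closed the point $p'$ is never an endpoint. For a regular $p'$, first-order optimality forces $q^\star-p'$ to be parallel to the unit normal $n_{p'}(q^\star)$; for a non-endpoint critical $p'$, the convention stated before Definition \ref{d5} picks $n_{p'}(q^\star)$ so as to maximize $|\langle u, q^\star-p'\rangle|$. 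Either way $|\langle n_{p'}(q^\star), q^\star - p'\rangle| = h$, and hence $|v_{q^\star}(\gamma')| = \tfrac{h}{\sigma}\, e^{-h^2/\sigma^2}$. Combining,
\[
\dQ^{\sigma,\infty}(\gamma,\gamma') \;\geq\; |v_{q^\star}(\gamma)-v_{q^\star}(\gamma')| \;=\; \frac{h}{\sigma}\, e^{-h^2/\sigma^2}.
\]

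Under the standing reading of ``$\sigma$ sufficiently large'' set out in the paragraph preceding the lemma, the Gaussian factor is treated as $1$ and the right-hand side becomes $\tfrac{h}{\sigma} = \tfrac{1}{\sigma}\,\dH(\gamma,\gamma')$, which is exactly the claim. The hypothesis $h \leq \sigma/\sqrt{2e}$ keeps us inside the monotone-increasing regime of $x \mapsto \tfrac{x}{\sigma} e^{-x^2/\sigma^2}$, so smaller Hausdorff distances produce proportionally smaller coordinate gaps and no artificial saturation of the witness can occur. The main obstacle I foresee is therefore purely bookkeeping around the Gaussian envelope: at finite $\sigma$ the chosen landmark only witnesses $\tfrac{h}{\sigma} e^{-h^2/\sigma^2}$ rather than $\tfrac{h}{\sigma}$ exactly, and one must lean on the sufficient-largeness assumption (or an asymptotic reading) to absorb this factor. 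A minor subtlety is the case in which the closest point $p'$ on $\gamma'$ is a corner, which is handled cleanly by the paper's prescription for $n_p(q)$ at non-endpoint critical points.
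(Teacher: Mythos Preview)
Your argument is essentially the same as the paper's: both pick the Hausdorff-realising point $a^\star\in\gamma$ as the witnessing landmark, observe $v_{a^\star}(\gamma)=0$ and $|v_{a^\star}(\gamma')|=\tfrac{h}{\sigma}$ (after absorbing the Gaussian via the ``$\sigma$ sufficiently large'' convention), and conclude. If anything you are more explicit than the paper about why $|\langle n_{p'}(q^\star),q^\star-p'\rangle|=h$ at regular versus critical $p'$ and about the role of the hypothesis $h\le \sigma/\sqrt{2e}$.
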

\begin{proof}
Let $r = \dH(\gamma, \gamma')$. Without loss of generality we can assume that $r = \sup_{p \in \gamma} \inf_{p' \in \gamma'} \|p - p'\|$. Since the range of $\gamma$ is compact (the image of a compact set under a continuous map is compact), there is $p \in \gamma$ such that $r = \min_{p' \in \gamma'} \|p - p'\|$. Similarly, by the continuity of the range of $\gamma'$ we conclude that there is $p' \in \gamma$ such that $r = \|p - p'\|$. Because $Q$ is dense in $\Omega$ then $p \in Q$. Since $p' = \argmin_{p'' \in \gamma'} \|p - p''\|$, we observe that $v_p(\gamma') = \frac{1}{\sigma} \|p-p'\| = \frac{r}{\sigma}$. On the other hand, $v_p(\gamma) = 0$ as $p \in \gamma$. Therefore, at least one of the components of the sketched vector $v_Q(\gamma') - v_Q(\gamma)$ is $\frac{r}{\sigma}$ and so $\sigma \dQ^{\sigma, \infty}(\gamma, \gamma') \geq r$. 
\end{proof}

\begin{cor}\label{c1}
Assume $Q$ is a uniform measure on $\Omega$ and $\sigma$ is sufficiently large. Let $\gamma, \gamma' \in \Gsim_{\Omega}$ be two closed curves. Then $\dH(\gamma, \gamma') \leq \sqrt{2e} \, {\rm diam}(\Omega) \dQ^{\sigma, \infty}(\gamma, \gamma')$.
\end{cor}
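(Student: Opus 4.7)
The plan is to deduce this corollary directly from Lemma \ref{lem:H-lb} by exploiting the fact that both curves are confined to $\Omega$.

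First, I would observe that since $\gamma, \gamma' \in \Gsim_\Omega$ have their images in $\Omega$, their Hausdorff distance is automatically bounded by the diameter of $\Omega$:
\[
\dH(\gamma,\gamma') \leq \diam(\Omega).
\]
This is the only global geometric fact about the problem I need.

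Next, I would calibrate $\sigma$ (which the statement already allows us to take ``sufficiently large'') to the value $\sigma = \sqrt{2e}\,\diam(\Omega)$, or any larger value for which the exponential approximation in the proof of Lemma \ref{lem:H-lb} remains valid. With this choice, the hypothesis of Lemma \ref{lem:H-lb} is satisfied, because
\[
\dH(\gamma,\gamma') \leq \diam(\Omega) = \frac{\sigma}{\sqrt{2e}}.
\]
Thus the lemma applies and yields $\tfrac{1}{\sigma}\dH(\gamma,\gamma') \leq d_Q^{\sigma,\infty}(\gamma,\gamma')$, and multiplying through by $\sigma = \sqrt{2e}\,\diam(\Omega)$ produces exactly the claimed inequality
\[
\dH(\gamma,\gamma') \leq \sqrt{2e}\,\diam(\Omega)\, d_Q^{\sigma,\infty}(\gamma,\gamma').
\]

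The main obstacle, such as it is, lies in interpreting ``sufficiently large'' correctly. The Lemma \ref{lem:H-lb} argument tacitly treated the Gaussian factor $e^{-\|p-p'\|^2/\sigma^2}$ as essentially $1$, which requires $\sigma$ to dominate the attainable distances; choosing $\sigma$ proportional to $\diam(\Omega)$ matches this requirement, since $\|p-p'\| \leq \diam(\Omega)$ for any realizing pair. So the delicate point is just to verify that this single calibration of $\sigma$ simultaneously (i) satisfies the Hausdorff bound $\dH \leq \sigma/\sqrt{2e}$ needed to invoke the lemma, and (ii) is ``sufficiently large'' for the lemma's approximation; both follow from $\sigma \geq \sqrt{2e}\,\diam(\Omega)$. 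No further case analysis or perturbation argument is needed.
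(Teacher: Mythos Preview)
Your proposal is correct and matches the paper's (implicit) approach: the paper states Corollary~\ref{c1} without proof, as an immediate consequence of Lemma~\ref{lem:H-lb}, and your deduction---bounding $\dH(\gamma,\gamma')\le\diam(\Omega)$ and calibrating $\sigma$ so that the lemma's hypothesis $\dH\le\sigma/\sqrt{2e}$ is automatic---is exactly the intended one-line argument. Your remark that the same $\sigma$ must also be large enough for the Gaussian factor in the lemma's proof to be negligible is the only genuine subtlety, and you handle it correctly.
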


\begin{cor} \label{c6}
Let $Q$ be a uniform measure on $\Omega$ and $\sigma$ be sufficiently large. Let $\gamma, \gamma' \in \Gsim_{\Omega}$ be two closed convex curves with both oriented clockwise/counterclockwise and $\dH(\gamma, \gamma') \leq \frac{\sigma}{\sqrt{2e}}$. Then $\dQ^{\sigma, \infty}(\gamma, \gamma') = \frac{1}{\sigma} \dF(\gamma, \gamma')$. 
\end{cor}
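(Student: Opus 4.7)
}
The plan is to sandwich $\dQ^{\sigma,\infty}(\gamma,\gamma')$ between $\frac{1}{\sigma}\dF(\gamma,\gamma')$ from both sides, using the landmark-wise inequality already established in Corollary \ref{c4} for the upper direction and Lemma \ref{lem:H-lb} for the lower direction, and then invoke the classical fact that for closed convex curves with matching orientations the Hausdorff and \Frechet distances coincide.

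For the upper bound, I would revisit the proof of Corollary \ref{c4}: there, the key landmark-wise inequality was
\[
|v_q(\gamma)-v_q(\gamma')| \leq \tfrac{1}{\sigma}\dF(\gamma,\gamma')
\]
for every $q \in Q$, established by a convexity argument that did not depend on averaging. Since this bound holds at every landmark, taking the supremum over $q \in Q$ (rather than the $l^2$ average as in Corollary \ref{c4}) yields $\dQ^{\sigma,\infty}(\gamma,\gamma') \leq \tfrac{1}{\sigma}\dF(\gamma,\gamma')$ directly. The hypothesis that $\sigma$ is sufficiently large ensures the exponential factor $e^{-\|q-p\|^2/\sigma^2}$ may be treated as $1$, matching the convention used in the statements of Lemma \ref{lem:H-lb} and its corollaries.

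For the lower bound, apply Lemma \ref{lem:H-lb}, whose hypothesis $\dH(\gamma,\gamma')\leq \sigma/\sqrt{2e}$ is exactly assumed here, to conclude $\tfrac{1}{\sigma}\dH(\gamma,\gamma') \leq \dQ^{\sigma,\infty}(\gamma,\gamma')$. Then I would argue that for two closed \emph{convex} curves with a common (say counterclockwise) orientation, $\dH(\gamma,\gamma') = \dF(\gamma,\gamma')$. The construction I have in mind is to take a closest-point map: for each $t\in[0,1)$, let $\alpha(t)$ parametrize the unique (or, at critical points, suitably chosen) nearest point of $\gamma'$ to $\gamma(t)$. Convexity and matching orientation make this assignment monotone cyclically, so $\alpha\in \Pi^\circ$, and by definition of $\dH$ the supremum of $\|\gamma(t)-\gamma'(\alpha(t))\|$ is at most $\dH(\gamma,\gamma')$; combined with the trivial inequality $\dH\leq \dF$, this yields equality.

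Combining the two bounds with $\dH=\dF$ gives $\dQ^{\sigma,\infty}(\gamma,\gamma')=\tfrac{1}{\sigma}\dF(\gamma,\gamma')$. The main obstacle I expect is the rigorous justification of $\dH=\dF$ for closed convex curves with matching orientation, specifically showing that the nearest-point correspondence is indeed monotone cyclic on a convex curve (one must handle flat sides and vertices carefully, and ensure one does not fall into the case where nearest points on $\gamma'$ jump backwards, which convexity plus matching orientation precisely rules out). Everything else is a direct application of the landmark-wise inequality from Corollary \ref{c4}'s proof and of Lemma \ref{lem:H-lb}.
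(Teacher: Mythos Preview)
Your proposal is correct and follows essentially the same approach as the paper: sandwich $\dQ^{\sigma,\infty}$ between $\frac{1}{\sigma}\dH$ (via Lemma \ref{lem:H-lb}) and $\frac{1}{\sigma}\dF$ (via the landmark-wise inequality from Corollary \ref{c4}), then use $\dH=\dF$ for closed convex curves. The only difference is that the paper simply cites Theorem~1 of \cite{AKW04} for the equality $\dH=\dF$, whereas you propose to re-derive it via a monotone nearest-point correspondence; your sketch of that argument is reasonable, but citing the known result saves you the case analysis you anticipate for flat sides and vertices.
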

\begin{proof}
Applying Lemma \ref{lem:H-lb} and the $p = \infty$ version of Corollary \ref{c4} we get 
$\frac{1}{\sigma} \dH(\gamma, \gamma') \leq \dQ^{\sigma, \infty}(\gamma, \gamma') \leq \frac{1}{\sigma} \dF(\gamma, \gamma')$. 
Now by Theorem 1 of \cite{AKW04} we know that the Hausdorff and Fr$\rm{\acute{e}}$chet distances coincide for closed convex curves. Therefore, $\dH(\gamma, \gamma') = \dF(\gamma, \gamma')$ and the proof is complete.
\end{proof}

The proof of Lemma \ref{lem:H-lb} shows that the inequality $\frac{1}{\sigma}\dH(\gamma, \gamma') \leq  \dQ^{\sigma, \infty}(\gamma, \gamma')$ remains valid for non-closed curves $\gamma$ and $\gamma'$ as long as $p'$ in the proof is not an endpoint of $\gamma'$. 
A piecewise linear curve $\gamma$ in $\mathbb{R}^2$ is called {\it $\kappa$-bounded}~\cite{AKW04} for some constant $\kappa \geq 1$ if for any $t,t' \in [0,1]$ with $t < t'$, $p = \gamma(t)$, $p' = \gamma(t')$ we have 
$\gamma([t,t']) \subseteq  B_r(p) \cup B_r(p')$,
where $r=\frac{\kappa}{2} \|p-p' \|$. 
The class of $\kappa$-bounded curves comprises of $\kappa$-straight curves \cite{AKW04},  curves with increasing chords \cite{Rot94} and self-approaching curves \cite{AAIKLR01}. 

\begin{cor} \label{c7}
Let $Q$ be a uniform measure on $\Omega$ and $\sigma$ be sufficiently large. Let $\gamma',\gamma \in \Gsim_{\Omega}$ be $\kappa$-bounded, with the Hausdorff distance not achieved at endpoints, and $\dH(\gamma, \gamma') \leq \frac{\sigma}{\sqrt{2e}}$. Then 
\[
\dF(\gamma, \gamma') \leq \sigma(\kappa+1) \dQ^{\sigma, \infty}(\gamma, \gamma'). 
\]
\end{cor}
\begin{proof}
Using the non-closed version of Lemma \ref{lem:H-lb} we have $\dH(\gamma, \gamma') \leq \sigma \dQ^{\sigma, \infty}(\gamma, \gamma')$. Now, since $\gamma$ and $\gamma'$ are $\kappa$-bounded, by Theorem 2 of \cite{AKW04} we have $\dF(\gamma, \gamma') \leq (\kappa +1) \dH(\gamma, \gamma')$. Combining these inequalities we get the desired result. 
\end{proof}

\subsection{Relation of $\dQmD$ to the Hausdorff distance}
\label{app:Hausdorff}

In this subsection we show that the unsigned variant of the sketch $v_{q_i}^{\mathsf{mD}}(\gamma)$ based only on the minDist function, has a strong relationship to the Hausdorff distance.  In particular, when $Q$ is dense enough, and the $l^\infty$ variant is used, they are identical.  

\begin{theorem} \label{old d_Q}
Let $\gamma, \gamma'$ be two continuous curves and $q \in \mathbb{R}^2$. Then 
$|v_q^{\mathsf{mD}}(\gamma) - v_q^{\mathsf{mD}}(\gamma') | \leq \dH(\gamma, \gamma').$
Consequently, $\dQmD(\gamma, \gamma') \leq \dH(\gamma, \gamma')$ for any landmark set $Q$. 
\end{theorem}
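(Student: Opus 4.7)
The plan is to exploit the fact that the distance-to-set function $q \mapsto \dist(q, \gamma) = v_q^{\mathsf{mD}}(\gamma)$ is $1$-Lipschitz not only in $q$ (for fixed $\gamma$) but also in $\gamma$ with respect to the Hausdorff distance. This is essentially the classical stability of the distance function under Hausdorff perturbation of the underlying set, and the proof will reduce to a single triangle-inequality chain together with a symmetric argument.

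First I would note that since $\gamma$ and $\gamma'$ are continuous images of the compact interval $[0,1]$, they are compact in $\R^2$, so the minimum in the definition of $v_q^{\mathsf{mD}}$ is attained. Pick $p \in \gamma$ with $\|q - p\| = v_q^{\mathsf{mD}}(\gamma)$. By the definition of the directed Hausdorff distance $\overrightarrow{\dH}(\gamma, \gamma') = \sup_{a \in \gamma} \inf_{b \in \gamma'} \|a - b\|$, and again using compactness of $\gamma'$, there exists $p' \in \gamma'$ with $\|p - p'\| \leq \overrightarrow{\dH}(\gamma,\gamma') \leq \dH(\gamma,\gamma')$. Then the triangle inequality gives
\[
v_q^{\mathsf{mD}}(\gamma') \;\leq\; \|q - p'\| \;\leq\; \|q - p\| + \|p - p'\| \;\leq\; v_q^{\mathsf{mD}}(\gamma) + \dH(\gamma,\gamma').
\]
Swapping the roles of $\gamma$ and $\gamma'$ in this argument (using the other directed Hausdorff distance, which is also bounded by $\dH(\gamma,\gamma')$) yields the reverse inequality, and combining the two gives the pointwise bound $|v_q^{\mathsf{mD}}(\gamma) - v_q^{\mathsf{mD}}(\gamma')| \leq \dH(\gamma,\gamma')$.

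For the consequence about $\dQmD$, since the pointwise bound holds uniformly for every $q \in \R^2$, it holds in particular for every $q_i \in Q$. Plugging into the definition of $\dQmD$ (the normalized $l^2$ average over $Q$, or more generally the $l^p$ average), each coordinate contributes at most $\dH(\gamma,\gamma')^p$, and the normalization by $1/|Q|^{1/p}$ kills the dependence on $|Q|$, producing $\dQmD(\gamma,\gamma') \leq \dH(\gamma,\gamma')$. The same conclusion holds for the measure-based definition by the same uniform pointwise bound.

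There is no real obstacle here; the only point to be careful about is that both $\gamma$ and $\gamma'$ are compact so that the relevant argmin and the point realizing the directed Hausdorff distance exist, which is immediate from continuity on $[0,1]$. If one wished to avoid compactness, the same bound goes through by taking infima and using an $\eps$-approximate minimizer, then letting $\eps \to 0$.
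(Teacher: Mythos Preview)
Your proof is correct and follows essentially the same approach as the paper: both arguments bound $|v_q^{\mathsf{mD}}(\gamma) - v_q^{\mathsf{mD}}(\gamma')|$ by passing from the closest point $p \in \gamma$ to a nearby point in $\gamma'$ furnished by the Hausdorff distance, applying the triangle inequality, and then symmetrizing. Your version is in fact slightly more streamlined, since you only introduce one auxiliary point in $\gamma'$ (the Hausdorff witness for $p$) rather than separately naming the $q$-nearest point and the $p$-nearest point as the paper does.
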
 
\begin{proof}
Let $r = \dH(\gamma, \gamma')$. Suppose $p = \argmin_{p\in \gamma} \|q-p\|$ and $p' = \argmin_{p'\in \gamma'} \|q-p'\|$. Let also $y = \argmin_{y\in \gamma} \|y-p' \|$ and $y' = \argmin_{y' \in \gamma} \|y'-p \|$. Then we have $\|q-p\| \leq \|q-y\|$ and $\|q-p' \| \leq \|q-y' \|$ and according to the definition of the Hausdorff distance $\|y-p' \| \leq r$ and $\|y' - p\| \leq r$. Now there are two possible cases: 

(i) $\|q-p\| \leq \|q-p' \|$. Then using the triangle inequality we get $0 \leq \|q-p\| - \|q-p' \| \leq \|q-y\| - \|q-p' \| \leq \|y- p' \| \leq r$.

(ii) $\|q-p' \| \leq \|q-p\|$. Then $0 \leq \|q-p' \| - \|q-p\| \leq \|q-y' \| - \|q-p\| \leq \|y' - p\| \leq r$. \\
Therefore, $| \|q-p\| - \|q-p' \| | \leq r$. The next inequality is immediate as we take average in computing $\dQ$. 
\end{proof}

\begin{cor}\label{c2}
Let $\Omega \subset \mathbb{R}^2$ be a bounded domain and $Q$ is dense in $\Omega$. If the range of $\gamma, \gamma'$ are included in $\Omega$, then ${\dQmD}^{\infty}(\gamma, \gamma') = \dH(\gamma, \gamma')$. 
\end{cor}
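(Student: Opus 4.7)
}

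The plan is to establish the equality by proving both inequalities, using Theorem \ref{old d_Q} for the easy direction and a compactness/density argument for the reverse.

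First, Theorem \ref{old d_Q} directly gives $|v_q^{\mathsf{mD}}(\gamma) - v_q^{\mathsf{mD}}(\gamma')| \leq \dH(\gamma,\gamma')$ for every $q \in \mathbb{R}^2$. Taking the supremum over $q \in Q$ yields ${\dQmD}^{,\infty}(\gamma,\gamma') \leq \dH(\gamma,\gamma')$, which is one direction.

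For the reverse inequality, let $r = \dH(\gamma,\gamma')$. Without loss of generality assume $r = \sup_{p \in \gamma} \min_{p' \in \gamma'} \|p-p'\|$; since the image of $\gamma$ is compact and the function $p \mapsto \min_{p' \in \gamma'} \|p-p'\|$ is continuous, there exists $p^\star \in \gamma \subseteq \Omega$ achieving this supremum. At $q = p^\star$ we would have $v_{p^\star}^{\mathsf{mD}}(\gamma) = 0$ and $v_{p^\star}^{\mathsf{mD}}(\gamma') = r$, so $|v_{p^\star}^{\mathsf{mD}}(\gamma) - v_{p^\star}^{\mathsf{mD}}(\gamma')| = r$. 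However, we are not told that $p^\star \in Q$, only that $Q$ is dense in $\Omega$, so the argument must approximate.

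To pass from $p^\star$ to landmarks in $Q$, I will use the fact that for any compact set $S$, the function $q \mapsto \min_{s \in S} \|q - s\|$ is $1$-Lipschitz in $q$ (a standard reverse triangle inequality argument). Consequently $q \mapsto v_q^{\mathsf{mD}}(\gamma) - v_q^{\mathsf{mD}}(\gamma')$ is $2$-Lipschitz. Choose a sequence $q_n \in Q$ with $q_n \to p^\star$, which exists by density; then $|v_{q_n}^{\mathsf{mD}}(\gamma) - v_{q_n}^{\mathsf{mD}}(\gamma')| \to |v_{p^\star}^{\mathsf{mD}}(\gamma) - v_{p^\star}^{\mathsf{mD}}(\gamma')| = r$. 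Therefore $\sup_{q \in Q} |v_q^{\mathsf{mD}}(\gamma) - v_q^{\mathsf{mD}}(\gamma')| \geq r = \dH(\gamma,\gamma')$, completing the proof.

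The only subtle step is handling the density versus supremum issue, i.e., ensuring that the extremal landmark $p^\star$ lies in $\overline{Q}$ so that the sup over $Q$ still attains (in the limit) the Hausdorff value; this is immediate once one notes that $p^\star \in \gamma \subseteq \Omega$ and $Q$ is dense in $\Omega$, combined with the Lipschitz continuity of $v_{\cdot}^{\mathsf{mD}}$. Everything else is a direct consequence of Theorem \ref{old d_Q} and a compactness argument on the image of $\gamma$.
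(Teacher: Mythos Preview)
Your proposal is correct and follows essentially the same approach as the paper: one inequality from Theorem \ref{old d_Q}, then pick the point $p^\star$ on $\gamma$ realizing the Hausdorff distance and use it (via density of $Q$) as the extremal landmark. The paper hand-waves the density step with the phrase ``without loss of generality, with an $\varepsilon$-discussion, we may assume that $p \in Q$,'' whereas you make this rigorous via the $1$-Lipschitz property of $q \mapsto v_q^{\mathsf{mD}}(\cdot)$ and a limiting sequence---so your write-up is actually more careful on that point.
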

\begin{proof}
Employing Theorem \ref{old d_Q} we only need to show ${\dQmD}^{\infty}(\gamma, \gamma')  \geq \dH(\gamma, \gamma')$. Let $r = \dH(\gamma, \gamma')$. Without loss of generality we can assume that $r = \sup_{p \in \gamma} \min_{p' \in \gamma'} \|p-p'\|$.  Since the range of $\gamma$ is compact (the image of a compact set under a continuous map is compact), there is $p \in \gamma$ such that $r = \min_{p' \in \gamma'} \|p-p'\|$. Similarly, by continuity of the range of $\gamma'$ we conclude that there is $p' \in \gamma$ such that $r = \|p - p'\|$. Because $Q$ is dense in $\Omega$, without loss of generality, with an $\varepsilon$-discussion, we may assume that $p \in Q$. Since $p' = \argmin_{p' \in \gamma'} \|p - p' \|$, we observe that $v_p^{\mathsf{mD}}(\gamma') = \|p-p' \| = r$. On the other hand, $v_p^{\mathsf{mD}}(\gamma) = 0$ as $p \in \gamma$. Therefore, at least one of the components of the sketched vector $v_Q^{\mathsf{mD}}(\gamma') - v_Q^{\mathsf{mD}}(\gamma)$ is $r$ and so ${\dQmD}^{\infty}(\gamma, \gamma') \geq r$. 
\end{proof}

\section{Experiments: Trajectories Analysis via $\dQ^{\sigma}$ Distance} \label{S4}

Like with recent vectorized distance $\dQmD$~\cite{PT19a,PT20}, this structure allows for very simple and powerful data analysis.  Nearest neighbor search can use heavily optimized libraries~\cite{KGraph,FALCONN}. Clustering can use Lloyd's algorithm for $k$-means clustering.  

Here we compare the use of $\dQ^\sigma$ with $\dQmD$ for various classification tasks.  Previous work~\cite{PT19a} demonstrated that $\dQmD$ performed comparably or significantly better than KNN classifiers with a large set of other distances (Dynamic Time Warping (DTW), discrete Hausdorff, discrete \Frechet (dF), LSH approximations of it, Edit distance with Replacements, LCSS).  We compare with DTW from \texttt{tslearn} and dF from \texttt{similaritymeasures} python libraries.
These other methods, can only use KNN classifiers, but the vectorized representation for $\dQ^\sigma$ and $\dQmD$ allow them to use any classification technique.  
We found $\dQmD$ and $\dQ^\sigma$ perform similarly on most data sets.  Here we demonstrate the efficacy of $\dQ^\sigma$ on three real-world data set and one synthetic one.  
When the orientation is inessential, it typically performs similar to $\dQmD$, but where orientation is essential, $\dQ^\sigma$ provides significant advantage over $\dQmD$. 
In these comparisons we choose $|Q| = 20$ landmarks for each, chosen randomly near the data points, and listed in the Appendix.  
The trajectories are randomly split (70/30) into train and test data; we report test errors (misclassification rates) for several classifiers, averaged over $1000$ random test-train splits in Table \ref{table-Beijing-Synthetic}.

\paragraph{Car-Bus.}
We first consider a set of 78 car and 45 bus trajectories from the GPS Trajectories Data Set \cite{GTDS2016} (available for public in UCI Machine Learning Repository) recorded in Aracuja, Brazil, after removing ones with $< 10$ critical points; see Figure \ref{car-bus-pigeon} (Left) in Appendix \ref{app:Q}. We randomly generate 20 points as landmarks $Q$ around the curves. 
With the $v_Q^\sigma$ we achieve the best mis-classification rate of $0.1567$ with Random Forest, while $v_Q^{\mathsf{mD}}$ can achieve error only $0.1830$.  Other non-linear classifiers achieve error between $0.22$ and $0.25$ for $v^\sigma_Q$  and $0.21$ and $0.30$ for $v_Q^{\mathsf{mD}}$.  The linear SVM with both only achieves a test error of $0.36$, indicating these representation, while informative, do not provide easily linearly separable representations.  
KNN classifiers with other distances have errors in the range $0.22$ to $0.33$, however these are not truly comparable since we did not have the same $Q$. We set $\sigma = 0.1$ for Gaussian and Poly-kernel SVM but $\sigma = 1000$ otherwise.

\paragraph{Characters.}
The Character Trajectories Data Set from UCI Machine Learning Repository consists of handwritten characters captured using a WACOM tablet. We chose similar letters {\it p} (131 trajectories) and {\it r} (119 trajectories) (see Figure \ref{Characters}) to classify. We randomly generate 20 points as landmarks and set $\sigma = 1000$. The signed feature map $v_Q^{\sigma}$ achieves $0.01$ error rate with Random Forest and Gaussian Kernel SVM whilst $v_Q^{\mathsf{mD}}$ cannot achieve better than $0.04$ error rate. 
KNN classifier with dynamic time warping, discrete \Frechet, Hausdorff, $\dQ^{\sigma}$ and $\dQmD$ distances achieve $0$, $0$, $0.01$, $0.02$ and $0.04$ error rates, respectively.  However, $\dQ^{\sigma}$ and $\dQmD$ take about $0.25$s to build and test a KNN classifier; DTW and Hausdorff take about $35$s, and discrete \Frechet takes about $13{,}000$s.

\paragraph{Pigeons flight paths dataset.}
\cite{meade2005three} and \cite{mann2011nine,mann2014six} studied 321 recorded flight paths of pigeons repeatedly released from Horspath in the Oxford area. 
$40$ of them are shown in Figure \ref{car-bus-pigeon} (Right) in Appendix \ref{app:Q}. 
We evaluate whether we can distinguish a pigeon flight path from its reverse by randomly choosing half of the trajectories and reversing them.  
We randomly create a set of landmarks of size 20 around the curves and set $\sigma=0.05$. The unsigned minDist function $v_Q^{\mathsf{mD}}$ cannot achieve better than $0.50$ error rate; whereas our new map $v_Q^{\sigma}$ achieves $0.001$ error rate with Poly-Kernel SVM and at most $0.01$ with other classifiers.
KNN classifier with $\dQ^{\sigma}$ and dynamic time warping achieve $0.01$ and $0$ error rates, respectively, and discrete Frechet did not complete in 24 hours.  
On the other hand, non-orientation-preserving distances of $\dQmD$ and Hausdorff distance had $0.47$ and $0.42$ error rates, respectively. 

\paragraph{Directional synthetic dataset.}
Lastly, we create a synthetic dataset for which the direction information is essential.  
We generate $200$ trajectories, so half move in a clockwise direction, and half in a counter-clockwise direction. Each trajectory has $100$ waypoints.   
We define two boxes $A = [-1,1] \times [-1,1]$ and $B=[48,49] \times [0,5]$.  For each trajectory, the start and end point is chosen in $A$, and the middle waypoint (number $50$) will be in box $B$.  
For the first set of trajectories the first halfway points $2 \ldots 49$ will be in range $[-1,49] \times [0,5]$, and the second half $51 \ldots 99$ will be in $[-1,49] \times [-5,0]$.  
For the second half of the trajectories, the $y$ range of the intermediate points $2 \ldots 49$ and $51 \ldots 99$ are reversed (the first half in $[-5,0]$ and second half in $[0,5]$). We try to classify the first half from the second half. 
While $v_Q^{\mathsf{mD}}$ never achieves better than $0.48$ error rate (not much better than random), with all classifiers we achieve close to an error rate of $0$ using $v_Q^\sigma$.  
Using KNN classifiers, dynamic time warping and discrete \Frechet, can also achieve near-$0$ error rates. In this experiment $\sigma$ is set to $5$, but other values can work as well as $5$.

\newpage
\begin{table}[htbp]
\caption{Test errors with $v_Q^{\sigma}$ and $v_Q^{\mathsf{mD}}$ vectorizations.} \label{table-Beijing-Synthetic} 
{
\begin{tabular}{|c|l|cc|cc|}
\hline
& {\bf Feature Mapping} &      $v_Q^{\sigma}$ &   & $v_Q^{\mathsf{mD}}$ &  \\
\hline
&          {\bf Classifier}  &    {\bf Mean} & {\bf Std} & {\bf Mean} &  {\bf Std} 
\\ \hline 
\multirow{5}{*}{\rotatebox[origin=c]{90}{Car-Bus}} 
        & Linear SVM, $C = 1$ &       0.3611 &   0.0000  &      0.3611 &   0.0000 \\
       & Gaussian Kernel SVM , $C=1000$, $\gamma=1000$ &      0.2246 &            0.0588  &      0.3017 &     0.0640 \\
 & Poly Kernel SVM, $C=1000$, deg=auto   &   0.2453 &         0.0554  &      0.2880 &    0.0663 \\
             & Decision Tree &      0.2299 &           0.0715  &  0.2123 &   0.0687 \\
   & RandomForest with 100 estimators, max depth=7 &      \textbf{0.1567} &     0.0559  &      0.1830 &   0.0620 
\\ \hline
\multirow{5}{*}{\rotatebox[origin=c]{90} {Characters}} 
           & Linear SVM, $C=1000$ &   0.0176 &    0.0235  & 0.0401 &  0.0356  \\
  & Gaussian Kernel SVM, $C = 1$, $\gamma$ = auto &   0.0124 &  0.0151 &  0.0381 &          0.0268   \\
& Poly Kernel SVM, $C=10$, deg = auto &   0.0281 &   0.0224 &  0.0483 &  0.0345   \\
& Decision Tree &  0.0178 &   0.0284    &      0.0737 &  0.0465    \\
&  RandomForest with 100 estimators  &  \textbf{0.0100}  &  0.0138   &  0.0486 &    0.0330  
\\ \hline
\multirow{5}{*}{\rotatebox[origin=c]{90}{Pigeons}}
           & Linear SVM, $C=10000$ &          0.0025 &             0.0074  & 0.5058 &          0.0386  \\
  &Gaussian Kernel SVM, $C = 100$, $\gamma$ = auto &           0.0057 &           0.0103 &     0.5167 &          0.0388   \\
 & Poly Kernel SVM, $C=1000$, deg = auto &          \textbf{0.0010} &              0.0042 &        0.5030 &          0.0399   \\
        & Decision Tree &        0.0158 &         0.0129    &      0.5156 &  0.0435    \\
   & RandomForest with 100 estimators  &     0.0060&          0.0079   &       0.5243 &          0.0394  
\\ \hline
\multirow{5}{*}{\rotatebox[origin=c]{90}{Directional}} 
           & Linear SVM, $C=1$ &           \textbf{0.0000} &             0.0000  & 0.5116 &          0.0574  \\
  & Gaussian Kernel SVM, $C=1$, $\gamma$ = auto &           \textbf{0.0000} &           0.0000 &     0.5116 &          0.0527   \\
 & Poly Kernel SVM, $C=1$, deg = auto &          \textbf{0.0000} &              0.0000 &        0.4864 &          0.0525   \\
        & Decision Tree &        0.0036 &         0.0088    &      0.5136 &  0.0608    \\
   &  RandomForest with 100 estimators  &     \textbf{0.0000} &          0.0000   &       0.5032 &          0.0584   \\
\hline 
\end{tabular}
}
\end{table}

\newpage

\bibliographystyle{plain}
\bibliography{references}

\newpage
\appendix

\section{Technical Details on Defining the Normal and Computing $v_q(\gamma)$} \label{app:alg} 

We need to assign a normal vector to each point of a curve $\gamma$. Let $\gamma\in \Gamma'$ and $q \in \mathbb{R}^2$. Assume $p= \argmin_{p'\in \gamma} \|q-p'\|$. If $p \in \reg(\gamma)$, as we mentioned earlier, according to the right hand rule, we can assign a unit normal vector to $\gamma$ at $p$ which is compatible with the direction of $\gamma$. We also assign a fixed normal vector at endpoints of a non-closed curve as we can use the normal vector of tangent line at endpoints that is compatible with the direction of curve. Now it remains to define a normal vector at critical points. We must be careful about doing this as any vector can be considered as a normal vector at critical points. Our aim is to define a unique normal vector at critical points with respect to a landmark point. Assume $p$ is not an endpoint of $\gamma$. Denote by $N(p)$ the closure of the set of all unit vectors $u$ such that $u$ is perpendicular to $\gamma$ at $p$ and is compatible with the direction of $\gamma$ by the right hand rule (for instance, in Figure \ref{Fig15}(a), $N(p)$ is the set of all unit vectors between $n$ and $n'$). 
Notice that for regular points on the curve $N(p)$ is a singleton and indeed, it does not depend on $q$ but only on the direction of the curve. Then we define 
\[
n_{p}(q)= \argmax \big\{|\langle u, q-p\rangle|: u \in N(p)\big\}.
\]
It can be readily seen that $n_p(q) = \sign(q,p,\gamma) \frac{q-p}{\|q-p\|}$, where $\sign(q,p,\gamma)$ can be obtained via Algorithm \ref{alg1}.
At endpoints we fix a normal vector, the one that is perpendicular to the tangent line. Notice, as the range of $\gamma$ is a compact subset of $\mathbb{R}^2$ and the norm function is continuous, the point $p$ exists. Continuity of the inner product and compactness of $N(p)$ guarantee the existence of $n_{p}(q)$ as well. A different algorithmic approach for finding $n_p(q)$, when $p$ is a critical point, is given below.

\subsection{Computing $n_p(q)$ and the Sign}

Assume that a trajectory $\gamma$ is given by the sequence of its critical points (including endpoints) $\{c_i\}_{i=0}^n$ and $\|c_i-c_{i+1}\|>0$ for each $0\leq i \leq n-1$. The following algorithm determines the sign of a landmark point $q$ with respect to $\gamma$ when $p = \argmin\{\|q-x\|: x \in \gamma\}$ is a critical point of $\gamma$. 

\begin{algorithm}
\caption{Find $\sign(q, p, \gamma)$}
\label{alg1}
\begin{algorithmic}
\STATE {\textbf{Input:} A landmark $q\in \mathbb{R}^2$, a trajectory $\gamma = \{c_i\}_{i=0}^n$ and $p = c_i$ for some $0\leq i \leq n$.}
\STATE Find $w_i$ (unit normal vector to the directed segment $\overrightarrow{c_i c_{i+1}}$).
\STATE Find $\alpha$ (the angle between $w_{i-1}$ and $w_i$)
\STATE Find $\theta$ (the angle between $w_{i-1}$ and $q-p$)
\STATE Put $t = \frac{1}{2}(1 - \cos(\frac{\pi \theta}{\alpha}))$
\STATE Let $n_t$ be the normalized convex combination of $w_{i-1}$ and $w_i$ by $t$\;
\RETURN $\sign(\langle n_t, q-p \rangle)$.
\end{algorithmic}
\end{algorithm}
Because each step in Algorithm \ref{alg1} takes constant time, it is clear that the running time is $O(1)$.  
Now, in Algorithm \ref{alg2}, for a landmark point $q\in \mathbb{R}^2$ and a trajectory $\gamma$ we provide steps to compute the sketch vector $v_q(\gamma)$.

\begin{algorithm}
\caption{Compute $v_q(\gamma)$}
\label{alg2}
\begin{algorithmic}
\STATE {A landmark $q \in \mathbb{R}^2$ and a trajectory $\gamma = \{c_i\}_{i=0}^n$}
\FOR {$i = 0, \ldots, n-1$}
\STATE Find $d_i$, the distance of $c_i$ from segment $S_i =  \overrightarrow{c_i c_{i+1}}$
\ENDFOR
\STATE Set $j = \argmin\{d_i: 0 \leq i \leq n-1\}$
\STATE Set $p = \argmin\{\|q-x\|: x \in S_j\}$
\STATE Using Algorithm \ref{alg1} compute $v_q(\gamma)$
\RETURN $v_q(\gamma)$
\end{algorithmic}
\end{algorithm}
It can be readily seen that Algorithm \ref{alg2} can be run in linear time  in terms of the size of $\gamma$. 

\subsection{Defining the Normal: A Computational Approach}

If we look at self-crossing curves, for instance, we will notice that the landmark point $q$ will opt for the crossing point $p$ only if tangent lines of $\gamma$ at $p$ (where $q$ lies in it) make an angle $\beta \geq \pi$ (see Figure \ref{Fig11}). Therefore, there is no need to define a normal vector for all crossing points and without loss of generality we may assume that $\beta \geq \pi$. 

\begin{figure}[h] 
\centering
\includegraphics{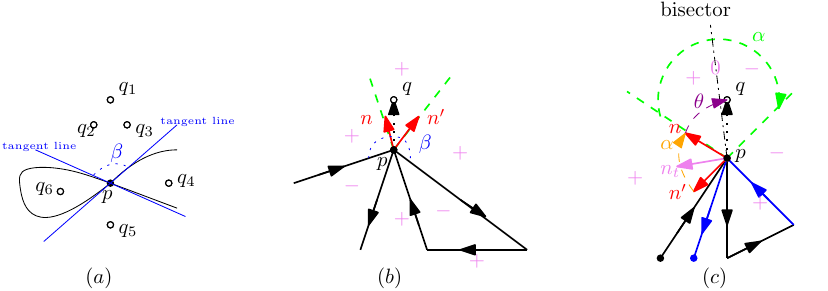}
\caption{Choice of self-crossing point}
\label{Fig11}
\end{figure} 

Let $\gamma$ be a curve with $\beta \geq \pi$ at a crossing point $p$ (Figure \ref{Fig11}(b),(c)). For $t\in [0,1]$ we consider $n_t  = \frac{(1-t) n + t n'}{\|(1-t) n + t n'\|}$, where $n$ and $n'$ are normal vectors to the curve at a crossing point $p$. It is necessary to agree that $n_{1/2} = 0$ if $n'= -n$ which is possible when $\beta = \pi$. 
Now the question is how to choose the parameter $t$? Let $\alpha$ be the angle between $n$ and $n'$ and $\theta$ be the angle between $n$ and $q-p$ (as shown in Figure \ref{Fig11}(b),(c)), i.e. 
$\alpha = \arccos(\langle n, n'\rangle)$ and $\theta = \arccos(\langle n, \frac{q-p}{\|q-p\|}\rangle).$
Then $0 < \alpha \leq \pi$ and $0 \leq \theta \leq \alpha$ and thus $0 \leq \frac{\pi}{\alpha} \theta \leq \pi$. Now we can set $n_t = (1-\cos(\frac{\pi}{\alpha} \theta))/2$. 
Therefore, the following hold:
\begin{enumerate}
\item If $\theta = 0$, then $t=0$, $q$ is on the left dashed green line in Figure \ref{Fig11}(c) and $n_t = n = \frac{q-p}{\|q-p\|}$. 
\item Moving towards the bisector, $n_t$ rotates towards $n'$ and so $\theta$ increases (but still $\theta < \pi/2$). Hence $\langle n_t, q-p \rangle$ is positive and is decreasing as a function of $\theta$. 
\item When $\theta = \frac{\alpha}{2}$, $t=\frac{1}{2}$ and $q$ is on the bisector of $\theta$ and $\langle n_t, q-p \rangle = 0$ (Figure \ref{Fig11}(c)). 
\item Moving from bisector towards the other side of the green dashed angle, $\theta$ increases and $\theta > \pi/2$. Thus $\langle n_t, q-p \rangle$ is negative and decreases. 
\item If $\theta = \alpha$, then $t=1$, $q$ is on the right dashed green line in Figure \ref{Fig11}(c) and $n_t = n' = - \frac{q-p}{\|q-p\|}$. 
\end{enumerate}
However, we will only need to compute the inner product of $n_t$ and $q-p$, which can easily be obtained by $\langle n_t, q-p \rangle = \|q-p\| \cos(\frac{\pi}{\alpha} \theta)$.

The way we defined $n_t$ is a general rule for any crossing and critical point. Now we are going to clarify obtaining $n_t$ in different situations. 

\begin{enumerate}
\item Let $p$ be a critical point which is not a crossing point of $\gamma$. Then as above a landmark point $q$ will choose $p$ as an $\argmin$ point only if tangent lines at $p$  constitute an angle $\beta \geq \pi$ and $q$ is inside of that area (Figure \ref{Fig15}(a)). In this case we can easily see that $n_t = \frac{q-p}{\|q-p\|}$ for any $t$. 
\item In self-crossing case of Figure \ref{Fig11}(b), again we can observe that $n_t = \frac{q-p}{\|q-p\|}$ for any $t$.
\item If $p$ is an end point, we consider $n$ as the normal vector at $p$ to the tangent line at $p$ to $\gamma$ and we set $n' = -n$. Then for $t \in [0,\frac{1}{2})$, $n_t = n$, $n_{1/2} = 0$ and for $t \in (\frac{1}{2},1]$, $n_t = -n$ (see Figure \ref{Fig15}(b)). 
\end{enumerate}

As we saw above, $n_t$ depends upon the landmark point $q$, that is, a critical point $p$ can be an $\argmin$ point for many landmark points $q$. Therefore, we use the notation $n_p(q)$ instead of $n_t$. For $p \in \reg(\gamma)$ we set $n_p(q) = n_p$ for any $q$ such that $p = \argmin_{p' \in \gamma} \|q - p'\|$. 

\begin{figure}[h] 
\centering
\includegraphics[width=0.7 \textwidth]{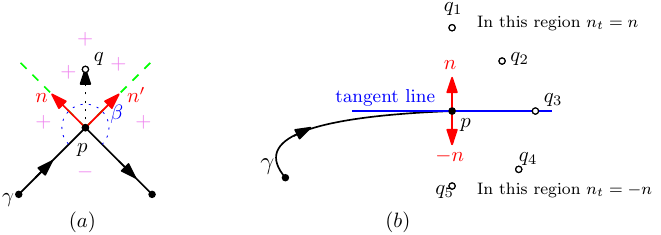}
\caption{Matching of $n_t$ with $n_p(q)$}
\label{Fig15}
\end{figure} 

\subsection{Trajectory plots and landmarks $Q$ used in Section \ref{S4}} \label{app:Q}

The landmarks $Q$ used in Car-Bus classification in Section \ref{S4} is \\
$Q = \{[-11.51839252, -36.00167761],
       [-13.82387038, -41.06091863],
       [ -5.52315713, -37.39424807], $ 
       
$\hspace{0.4cm}[-11.307905  , -37.68847429],
       [-13.34456129, -36.62837387],
       [-11.25127849, -38.96427889],$
       
$\hspace{0.4cm}       [ -9.70920127, -37.70032869],
       [-10.84289516, -36.12955072],
       [-13.29665413, -37.43363195],$
       
$\hspace{0.4cm}       [ -9.06848519, -37.87890405],
       [-11.1337595 , -36.73167919],
       [-11.15004646, -37.94099833],$
       
$\hspace{0.4cm}       [-11.76909718, -33.96734203],
       [-11.29377414, -40.15240307],
       [-11.38400841, -37.02645598],$

$\hspace{0.4cm}       [-10.64689404, -37.77147543],
       [-11.20407876, -36.41339879],
       [ -9.16302112, -37.81380282],$

$\hspace{0.4cm}       [-12.87662174, -38.41711695],
       [-12.27654292, -38.37274434]\}$. \\

The landmarks $Q$ used in characters dataset classification in Section \ref{S4} is \\
$Q = \{ [-32.4902653,  46.0044218],
       [-16.8353212,  4.43267019],
       [-110.587361,  25.9348076],$
       
$\hspace{0.4cm} [ 29.2457830, -39.9858732],
       [ 11.7810639, -77.9407977],
       [ 35.3840384, -12.5123595],$
       
$\hspace{0.4cm} [ 12.6823738, -19.9258348],
       [ 54.2110442,  31.0153988],
       [ 1.52758301,  3.62917151],$
       
$\hspace{0.4cm} [-46.2751522, -2.46585474],
       [ 32.5632345, -13.4574254],
       [ 3.96964624,  6.12114622],$
       
$\hspace{0.4cm} [-47.1061489,  17.8451989],
       [ 33.0163320,  4.06856467],
       [-39.1222051, -41.9573723],$
       
$\hspace{0.4cm} [-59.0871676,  62.9764287],
       [ 32.1785601,  72.9091697],
       [-48.3503786,  46.1095240],$
       
$\hspace{0.4cm} [ 25.0696612, -57.4618787],
       [ 8.60611470,  36.9777691] \}.$ \\

The landmarks $Q$ used in pigeons dataset classification in Section \ref{S4} is \\
$Q = \{[51.84008709, -1.20588467],
       [51.72283931, -1.26982576],
       [51.80884313, -1.27107579],$
       
       $\hspace{0.4cm} [51.70708245, -1.18863574],
       [51.76219485, -1.19758355],
       [51.74535629, -1.24242755],$
       
       $\hspace{0.4cm} [51.75137002, -1.26676556],
       [51.76411002, -1.3096359 ],
       [51.69815133, -1.26414871],$
       
       $\hspace{0.4cm} [51.75161713, -1.24502458],
       [51.75894275, -1.29741779],
       [51.71222909, -1.10956269],$
       
       $\hspace{0.4cm} [51.84580983, -1.24300095],
       [51.72781533, -1.1806641 ],
       [51.75728962, -1.27375996],$
       
       $\hspace{0.4cm} [51.6389533 , -1.24789817],
       [51.82411354, -1.13822127],
       [51.84027399, -1.2980129 ],$
       
       $\hspace{0.4cm} [51.75530875, -1.28604281],
       [51.84218647, -1.2052331] \}$. 
       
\newpage
       
The landmarks $Q$ used in synthetic dataset classification in Section \ref{S4} is \\
$Q = \{[31.25226115, -7.06554558],
       [15.5893362 ,  0.53644793],
       [46.0298242 ,  3.99239995],$
       
$\hspace{0.4cm} [42.01539614,  6.32997671],
       [47.38055911,  1.55938833],
       [20.77769647, -7.38738266],$

$\hspace{0.4cm}[ 2.49272512,  3.29996588],
       [22.88248179,  4.20715546],
       [11.45275598, -1.32409368],$

$\hspace{0.4cm} [44.07861775,  4.51315021],
       [45.82336144,  4.15814764],
       [22.84178316, -3.18598697],$
       
$\hspace{0.4cm} [11.455545  ,  1.70912469],
       [-0.63639767, -5.20555862],
       [14.28345207, -0.26886728],$

$\hspace{0.4cm}  [50.86332419,  4.82330847],
       [39.72143511,  6.7231165 ],
       [10.75453184, -3.27418179],$
       
$\hspace{0.4cm} [16.01803815,  1.11604296],
       [43.00847275, -4.83131665]\}$. \\ 
       
 \begin{figure}[h] 
\centering
\includegraphics[width=0.48 \textwidth]{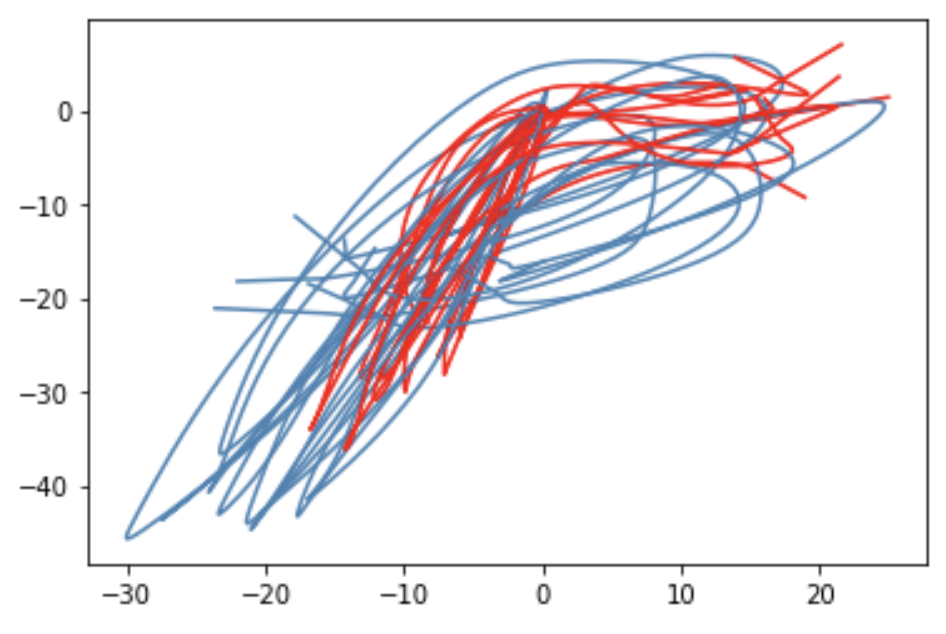}
\caption{10 trajectories (out of 131) for the letter {\it p} (blue) and 10 (out of 119) for the letter {\it r} (red).}
\label{Characters}
\end{figure} 

\begin{figure}[h] 
\centering
\includegraphics[width=0.31 \textwidth]{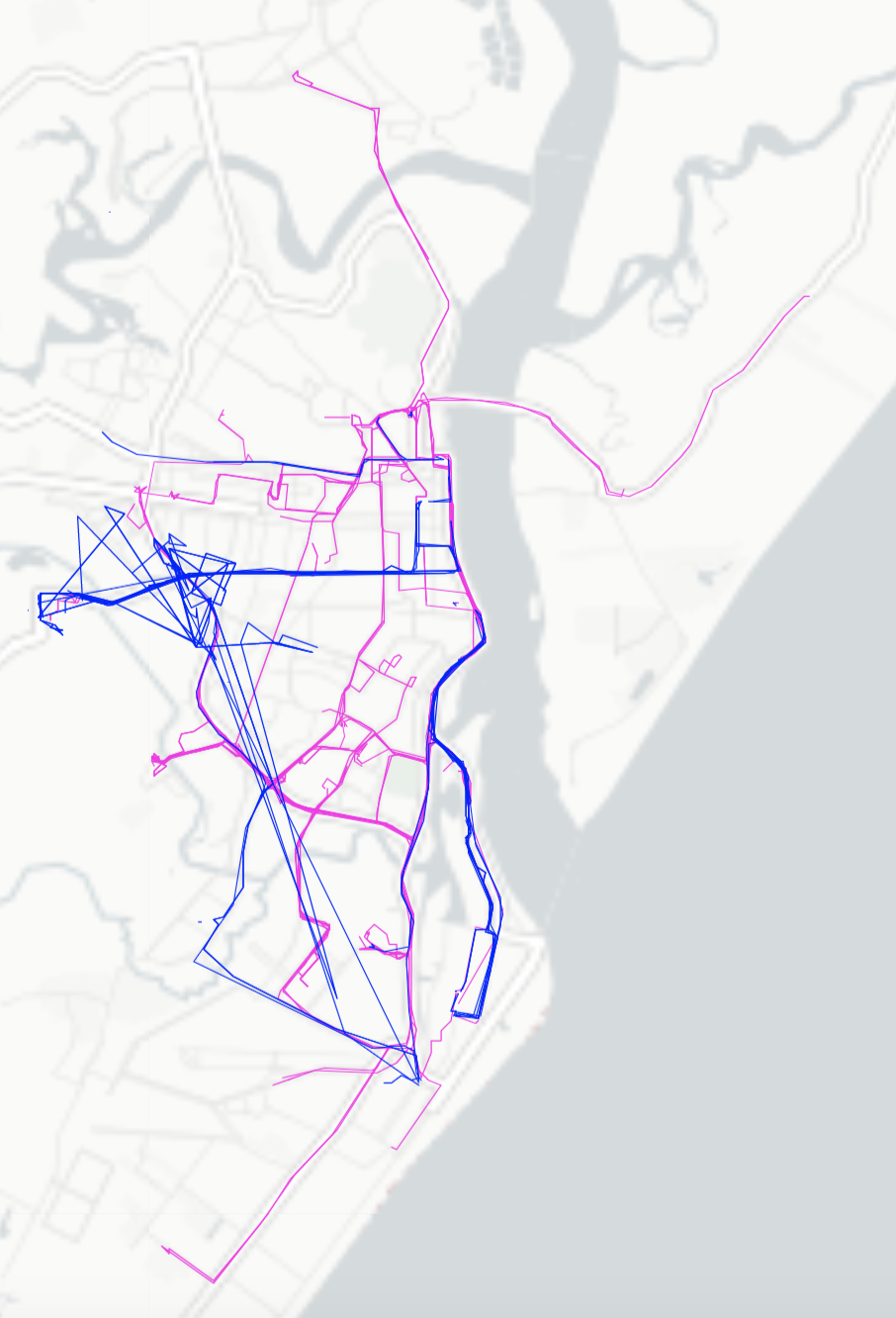} \hspace{0.5cm}
\includegraphics[width=0.405 \textwidth]{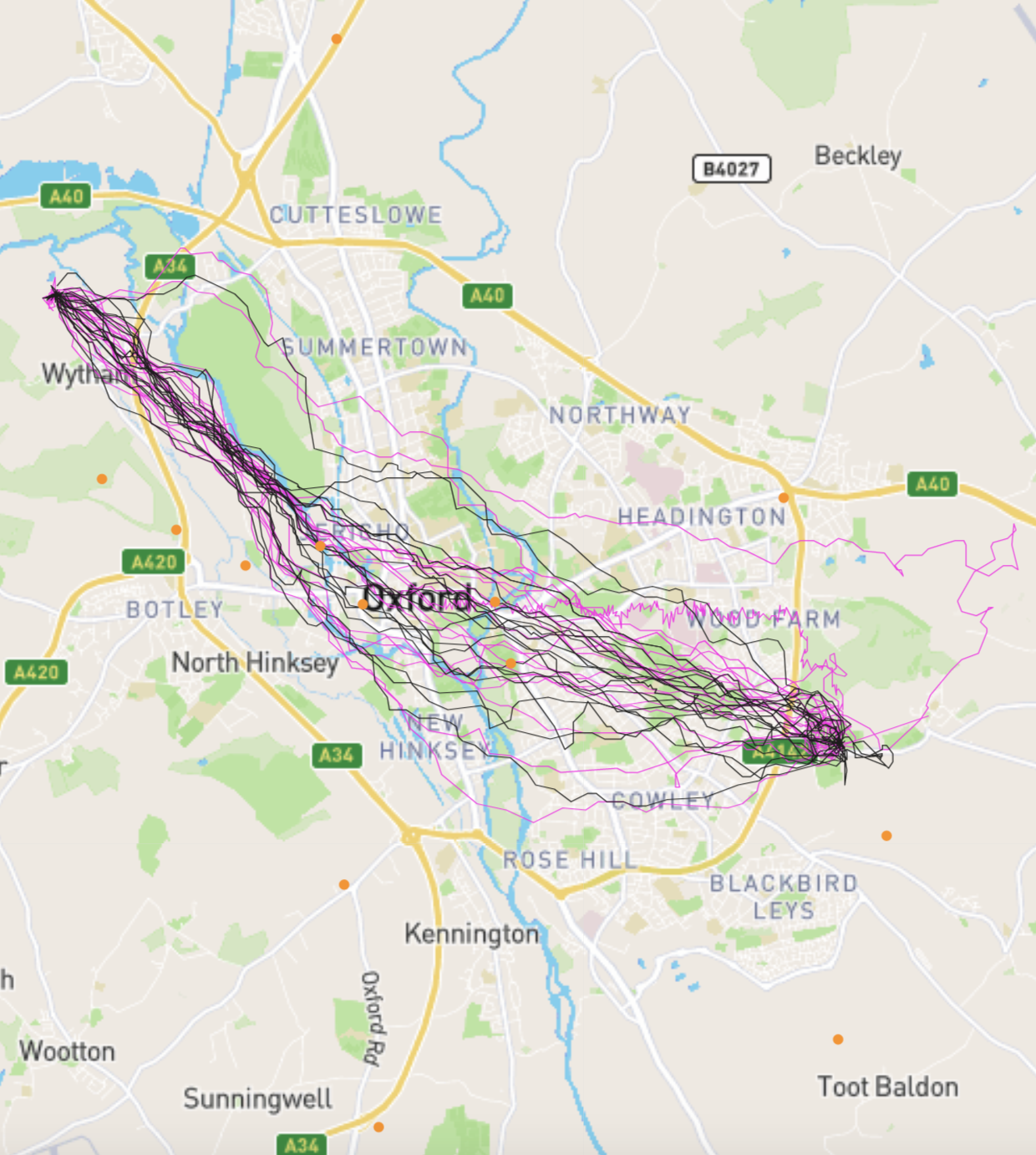}
\caption{Left: Bus (blue) and car (pink) trajectories from the GPS Trajectories Data Set \cite{GTDS2016} recorded in Aracuja, Brazil. Since landmarks are a bit far from trajectories, we have not shown them on the map. Right: 20 (out of 160) pigeons flight paths in one direction (pink) and 20 (out of 161) in the opposite direction (black) chosen from Horspath dataset recorded in Oxford area, with chosen landmarks (orange bullets).}
\label{car-bus-pigeon}
\end{figure}

The codes in Python and the car-bus dataset are stored in the following GitHub repository: \\
\url{https://github.com/aghababa/Orientation_Preserving_Distance_MSML2021}
       
\end{document}